\newtheorem{theorem}{Theorem}
\newtheorem{corollary}[theorem]{Corollary}
\newtheorem{proposition}[theorem]{Proposition}
\newtheorem{definition}[theorem]{Definition}
\newtheorem{remark}[theorem]{Remark}
\newcommand{\bb}{\mathbb}
\newcommand{\Z}{\bb{Z}}
\newcommand{\cC}{\mathcal{C}}
\newcommand{\de}{\delta}
\newcommand{\Sig}{\Sigma}
\newcommand{\ta}{\theta}
\newcommand{\om}{\omega}
\newcommand{\Om}{\Omega}
\newcommand{\pr}{\prime}
\newcommand{\sm}{\setminus}
\newcommand{\lan}{\langle}
\newcommand{\ran}{\rangle}
\newcommand{\lf}{\lfloor}
\newcommand{\rf}{\rfloor}
\newcommand{\F}{\mathbb{F}}
\newcommand{\Fq}{\mathbb{F}_q}
\newcommand{\Fqm}{\mathbb{F}_{q^m}}
\newcommand{\Fqt}{\mathbb{F}_{q^2}}
\newcommand{\Fr}{\mathbb{F}_r}
\newcommand{\az}{a_0^{(i)}}
\newcommand{\ao}{a_1^{(i)}}
\newcommand{\ez}{e_0^{(i)}}
\newcommand{\eo}{e_1^{(i)}}
\newcommand{\dt}{d_2^{(i)}}
\newcommand{\doo}{d_1^{(i)}}
\newcommand{\bu}{{\bf u}}
\newcommand{\bv}{{\bf v}}
\newcommand{\bc}{{\bf c}}
\begin{document}

\title{Constructions of Maximum Distance Separable Symbol-Pair Codes Using Cyclic and Constacyclic Codes}

\author{Shuxing Li$^{\text{a}}$ and Gennian Ge$^{\text{b,c,}}$\thanks{Corresponding author. Email address: gnge@zju.edu.cn. Research  supported by the National Natural Science Foundation of China under Grant Nos.  11431003 and 61571310.}\\
  \footnotesize $^{\text{a}}$ School of Mathematical Sciences, Zhejiang University, Hangzhou 310027, Zhejiang, China\\
  \footnotesize $^{\text{b}}$ School of Mathematical Sciences, Capital Normal University, Beijing, 100048, China\\
\footnotesize $^{\text{c}}$ Beijing Center for Mathematics and Information Interdisciplinary Sciences, Beijing, 100048, China}
\date{}
\maketitle

\begin{abstract}
Symbol-pair code is a new coding framework which is proposed to correct errors in the symbol-pair read channel. In particular, maximum distance separable (MDS) symbol-pair codes are a kind of symbol-pair codes with the best possible error-correction capability. Employing cyclic and constacyclic codes, we construct three new classes of MDS symbol-pair codes with minimum pair-distance five or six. Moreover, we find a necessary and sufficient condition which ensures a class of cyclic codes to be MDS symbol-pair codes. This condition is related to certain property of a special kind of linear fractional transformations. A detailed analysis on these linear fractional transformations leads to an algorithm, which produces many MDS symbol-pair codes with minimum pair-distance seven.

\medskip
\noindent {{\it Keywords and phrases\/}: Algebraic construction, Constacyclic codes, Cyclic codes, Linear fractional transformations, MDS symbol-pair codes, Symbol-pair codes
}\\
\smallskip

\noindent {{\it Mathematics subject classifications\/}: 68P20, 94B15, 94B60.}
\end{abstract}

\section{Introduction}

Motivated by high-density storage applications, a new coding framework named symbol-pair code was proposed in \cite{CB10,CB} to correct errors in the so-called symbol-pair read channel. Consider a scenario where we want to read data from certain storage medium. When the data is written in a very compact way and our data reader has relatively low resolution, instead of individual symbols, we can only receive overlapping pairs of symbols. Suppose the data symbols belong to an alphabet $\Sig$. Then, what we receive are pairs of symbols belonging to a different alphabet $\Sig \times \Sig$. In order to recover the original data reliably, we need a new coding scheme which is able to correct errors in this symbol-pair read channel.

Cassuto and Blaum laid the foundation of symbol-pair codes in \cite{CB10,CB}, which play the roles of error-correcting codes for the symbol-pair read channel. They presented several bounds and constructions, as well as a decoding algorithm for symbol-pair codes. The construction of symbol-pair codes are further studied in a series of papers, including algebraic constructions \cite{CL,CJKWY,KZL} and combinatorial constructions \cite{CJKWY}. Moreover, an efficient decoding algorithm of cyclic symbol-pair codes is proposed in \cite{YBS}.

In \cite{CJKWY}, the authors derived a Singleton-type bound for symbol-pair codes. Consequently, the concept of maximal distance separable (MDS) symbol-pair codes is proposed. The construction of MDS symbol-pair codes is interesting because they have the best possible capability against errors in the symbol-pair read channel. In general, there are two ways to construct MDS symbol-pair codes. The first one is direct construction using linear codes with appropriate properties, such as MDS codes \cite{CJKWY}, as well as cyclic and constacyclic codes \cite{KZL}. The second way is recursive construction employing the interleaving technique \cite{CJKWY,CKW}, the Eulerian graph \cite{CJKWY,CKW,KZL} and other combinatorial configurations \cite{CJKWY,CKW}.

In particular, we focus on the construction of $(n,d_p)_q$ MDS symbol-pair code whose minimum pair-distance $d_p$ is small. The known parameters of $(n,d_p)_q$ MDS symbol-pair codes with small $d_p$ are the following ones:
\begin{itemize}
\item[a)] $q \ge 2$, $n \ge 2$, $d_p \in \{2,3\}$ \cite{CJKWY},
\item[b)] $q \ge 2$, $n \ge 4$, $d_p=4$ \cite{CJKWY},
\item[c1)] $q$ is an even prime power, $n \le q+2$, $d_p=5$ \cite{CJKWY},
\item[c2)] $q$ is an odd prime, $5 \le n \le 2q+3$, $d_p=5$ \cite{CJKWY},
\item[c3)] $q$ is a prime power, $n\mid q^2-1$, $n>q+1$, $d_p=5$ \cite{KZL},
\item[c4)] $q$ is a prime power, $n=q^2+q+1$, $d_p=5$ \cite{KZL},
\item[c5)] $q\equiv 1 \pmod3$ is a prime power, $n=\frac{q^2+q+1}{3}$, $d_p=5$ \cite{KZL},
\item[d1)] $q$ is a prime power, $n=q^2+1$, $d_p=6$ \cite{KZL},
\item[d2)] $q$ is an odd prime power, $n=\frac{q^2+1}{2}$, $d_p=6$ \cite{KZL},
\item[e)] $q$ is an odd prime, $n=8$, $d_p=7$ \cite{CJKWY}.
\end{itemize}

In this paper, we follow the idea in \cite{KZL} to construct MDS symbol-pair codes by employing cyclic and constacyclic codes. We use $v_p(n)$ to denote the largest integer $a$, such that $p^a \mid n$, where $p$ is a prime. We obtain the following new classes of $(n,d_p)_q$ MDS symbol-pair codes with $d_p\in\{5,6\}$.
\begin{itemize}
\item[1)] Let $q$ be a prime power. Let $n$ and $r$ be two integers such that
$$
r \mid q-1, \; nr \mid q^3-1, \; nr \nmid q-1, \; (\frac{q-1}{r},n)=1.
$$
Then there exists an $(n,d_p)_q$ MDS symbol-pair code with $d_p=5$.

\item[2)] Let $q$ be a prime power, Let $n$ and $r$ be two integers such that
$$
nr \mid (q-1)(q^2+1), \; nr \nmid q^2-1, \; (\frac{q-1}{r},n)=1.
$$
Then there exists an $(n,d_p)_q$ MDS symbol-pair code with $d_p=6$.

\item[3)] Let $q$ be a prime power and $n \mid q^2-1$. If $n$ is odd or $n$ is even and $v_2(n)<v_2(q^2-1)$, then there exists an $(n,d_p)_q$ MDS symbol-pair code with $d_p=6$.
\end{itemize}
We remark that the class 1) (resp. class 2)) is an extension of the classes c4) and c5) (resp. classes d1) and d2)). More interestingly, for a class of cyclic codes, we find a necessary and sufficient condition which guarantees them to be MDS symbol-pair codes with minimum pair-distance $d_p=7$. We observe that this condition is related to the property of a special kind of linear fractional transformations. Moreover, we present a detailed analysis of these linear fractional transformations, which leads to a precise characterization of this condition. Using this characterization, we obtain many examples of MDS symbol-pair codes with minimum pair-distance $d_p=7$.

The rest of this paper is organized as follows. Section~\ref{sec2} gives a brief introduction to cyclic and constacyclic codes. Some preliminaries concerning symbol-pair codes and MDS symbol-pair codes are also presented. Employing cyclic and constacyclic codes, several constructions of MDS symbol-pair codes are presented in Section~\ref{sec3}. Section~\ref{sec4} concludes the paper.

\section{Preliminaries}\label{sec2}

\subsection{Cyclic and constacyclic codes}

Let $q$ be a prime power, $\Fq$ be a finite field and $\om \in \Fq^*$. An $\om$-constacyclic code $\cC$ is a linear code which is invariant under the constacyclic shift. Namely, if
$$
(c_0,c_1,\ldots,c_{n-1}) \in \cC,
$$
then
$$
(\om c_{n-1},c_0,\ldots,c_{n-2}) \in \cC.
$$
An $\om$-constacyclic code $\cC$ of length $n$ over $\Fq$ can be identified with an ideal of the principal ideal ring $\Fq[x]/(x^n-\om)$. Thus, $\cC$ can be generated by one element. There is a unique monic polynomial $g(x) \in \Fq[x]$ of minimum degree in $\cC$, such that $g(x) \mid x^n-\om$ and $\cC=\lan g(x)\ran$. This polynomial is called the {\it generator polynomial} of $\cC$. Given the ring $\Fq[x]/(x^n-\om)$ and a generator polynomial $g(x)$, an $\om$-constacyclic code $\cC=\lan g(x)\ran$ of length $n$ is determined, which is a linear subspace of $\Fq^n$ with dimension $n-\deg(g(x))$. When $\om=1$, an $\om$-constacyclic code is simply a cyclic code.

Suppose $\om \in \Fq^*$ is an element of order $r$ and $m$ is the smallest integer such that $nr \mid q^m-1$. Then we can find an element $\de \in \Fqm^*$ of order $nr$, such that $\om=\de^n$. Therefore the roots of $x^n-\om$ are of the form $\{ \de^{1+jr} \mid 0 \le j \le n-1\}$. Define $\Om=\{1+jr \mid 0 \le j \le n-1\}$. For $s \in \Om$, the $q$-cyclotomic coset modulo $nr$ containing $s$ is defined to be $C_s=\{q^is \pmod{nr} \mid 0 \le i \le m-1 \}$. Since $g(x) \in \Fq[x]$ and $g(x) \mid x^n-\om$, we have $g(x)=\prod_{s \in S}\prod_{j \in C_s} (x-\de^j)$, where $S \subset \Om$ is a subset of representatives of the $q$-cyclotomic cosets modulo $nr$.

For cyclic codes, we have the well-known BCH bound on the minimum distance. Similarly, we have the following BCH-type bound on the minimum distance of a constacyclic code, which is a slight generalization of \cite[Theorem 3]{KZL}.

\begin{proposition}\label{prop-BCH}
Let $q$ be a prime power and $n$ be a positive integer with $(n,q)=1$. Let $\om \in \Fq^*$ be an element of order $r$. Let $m$ be the smallest positive integer such that $nr \mid q^m-1$. Then there exists $\de \in \Fqm^*$, such that $\de$ has order $nr$ and $\om=\de^n$. Define $\xi=\de^r$. Let $\cC=\lan g(x)\ran \subset \Fq[x]/(x^n-\om)$ be an $\om$-constacyclic code with length $n$. Let $l$ be an integer with  $(l,n)=1$ and $d$ be an integer with $1 \le d \le n-1$. Suppose each element of $\{\de\xi^{li} \mid b\le i\le b+d-1\}$ is a root of the generator polynomial $g(x)$, where $b$ is an arbitrary integer. Then the minimum distance of $\cC$ is at least $d+1$.
\end{proposition}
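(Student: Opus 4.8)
The plan is to adapt the classical proof of the BCH bound to the constacyclic, multiplier-$l$ setting. First I would recall that a vector $(c_0,\dots,c_{n-1})$, viewed as $c(x)=\sum_{k=0}^{n-1}c_kx^k$, lies in $\cC=\lan g(x)\ran$ precisely when $g(x)$ divides $c(x)$ in $\Fq[x]/(x^n-\om)$, equivalently when $c$ vanishes at every root of $g$. By hypothesis the $d$ field elements $\de\xi^{li}$ for $b\le i\le b+d-1$ are among those roots, so every codeword satisfies $c(\de\xi^{li})=0$ for $b\le i\le b+d-1$.

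Then I would argue by contradiction. Suppose $\cC$ contains a nonzero codeword of Hamming weight $w\le d$, with support $\{k_1,\dots,k_w\}\subset\{0,\dots,n-1\}$ and nonzero coefficients $c_{k_1},\dots,c_{k_w}$. Expanding each relation as $c(\de\xi^{li})=\sum_{t=1}^w c_{k_t}\de^{k_t}\xi^{lik_t}=0$ and restricting to the $w$ consecutive indices $i=b,b+1,\dots,b+w-1$ (possible since $w\le d$), I would set $y_t=c_{k_t}\de^{k_t}\xi^{lbk_t}$ and $z_t=\xi^{lk_t}$, which rewrites the relations as the homogeneous system $\sum_{t=1}^w y_t z_t^{\,s}=0$ for $s=0,1,\dots,w-1$ in the unknowns $y_1,\dots,y_w$.

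The coefficient matrix of this system is the transpose of the Vandermonde matrix built from $z_1,\dots,z_w$, whose determinant is $\prod_{1\le t<t'\le w}(z_{t'}-z_t)$. The decisive step, and the only place where the hypotheses on $l$ and $\xi$ enter, is to verify that the $z_t$ are pairwise distinct. Since $\xi=\de^r$ and $\ord(\de)=nr$, the element $\xi$ has order exactly $nr/r=n$; combined with $(l,n)=1$, the map $k\mapsto\xi^{lk}$ is injective on $\{0,\dots,n-1\}$, so $z_t=z_{t'}$ forces $k_t=k_{t'}$. Hence the determinant is nonzero, the system admits only the trivial solution, and every $y_t=0$, contradicting $c_{k_t}\ne 0$. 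Therefore no nonzero codeword has weight at most $d$, and the minimum distance is at least $d+1$.

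I expect the linear-algebra bookkeeping to be routine; the genuinely essential point is the distinctness of the $z_t$, which is exactly where the coprimality $(l,n)=1$ and the order computation $\ord(\xi)=n$ are used. I would also note that $b$ being an arbitrary integer causes no difficulty, since all exponents are read modulo $n$, and that the block of indices $b,\dots,b+w-1$ selected for the Vandermonde argument is contained in the available range because $w\le d$.
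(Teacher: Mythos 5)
Your proposal is correct and follows essentially the same route as the paper: both rest on the Vandermonde determinant built from the points $\xi^{lk}$, with the hypotheses $\ord(\xi)=n$ and $(l,n)=1$ guaranteeing these points are pairwise distinct. The paper phrases this as nonsingularity of any $d$ columns of a Vandermonde-type submatrix of the parity-check matrix, while you argue the contrapositive with a putative weight-$w\le d$ codeword; these are two formulations of the identical classical BCH-bound argument.
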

\begin{proof}
The condition $(n,q)=1$ ensures that $g(x)$ has no repeated roots. Since each element belonging to $\{\de\xi^{li} \mid b\le i\le b+d-1\}$ is a root of $g(x)$, the matrix
$$
\begin{pmatrix}
1 & \de\xi^{bl} & \cdots & \de^{n-1}\xi^{(n-1)bl} \\
1 & \de\xi^{(b+1)l} & \cdots & \de^{n-1}\xi^{(n-1)(b+1)l} \\
\vdots & \vdots & & \vdots\\
1 & \de\xi^{(b+d-1)l} & \cdots & \de^{n-1}\xi^{(n-1)(b+d-1)l}
\end{pmatrix}
$$
is a submatrix of the parity matrix of $\cC$. Employing the condition $(l,n)=1$ and the property of the Vandermonde matrix, we conclude that any submatrix of the above one with $d$ columns must be nonsingular. Consequently, the minimum distance of $\cC$ is at least $d+1$.
\end{proof}

\subsection{Symbol-pair codes and MDS symbol-pair codes}

Let $\Sig$ be an alphabet consisting of $q$ elements. Given $\bu=(u_0,u_1,\ldots,u_{n-1})\in\Sig^n$, the {\it symbol-pair read vector} of $\bu$ is defined to be
$$
\pi(\bu)=((u_0,u_1),(u_1,u_2),\ldots,(u_{n-2},u_{n-1}),(u_{n-1},u_0)) \in (\Sig\times\Sig)^n.
$$
Let $\bu=(u_0,u_1,\ldots,u_{n-1})\in\Sig^n$ and $\bv=(v_0,v_1,\ldots,v_{n-1})\in\Sig^n$, the {\it pair-distance} between $\bu$ and $\bv$ is
$$
d_P(\bu,\bv)=|\{0\le i\le n-1 \mid (u_i,u_{i+1}) \ne (v_i,v_{i+1}) \}|,
$$
where the subscripts are regarded as integers modulo $n$. An $(n,M,d_p)_q$ symbol-pair code is a subset $\cC \subset \Sig^n$ with $|\cC|=M$, such that $d_p=\min\{d_P(\bu,\bv)\mid \bu,\bv\in\cC, \bu\ne\bv\}$. If $\Sig$ is a finite field $\Fq$, define the {\it pair-weight} of $\bu\in\Fq^n$ to be
$$
w_P(\bu)=|\{0\le i\le n-1 \mid (u_i,u_{i+1}) \ne (0,0)\}|,
$$
where the subscripts are regarded as integers modulo $n$. In particular, if the $(n,M,d_p)_q$ symbol-pair code $\cC$ is a linear subspace of $\Fq^n$, then $d_p=\min\{w_P(\bu)\mid \bu \ne (0,0,\ldots,0)\}$.

Let $\bu=(u_0,u_1,\ldots,u_{n-1})$ be the original vector. Let
$$
((u_0^{\pr},u_1^{\pr\pr}),(u_1^{\pr},u_2^{\pr\pr}),\ldots,(u_{n-2}^{\pr},u_{n-1}^{\pr\pr}),(u_{n-1}^{\pr},u_0^{\pr\pr})) \in (\Sig\times\Sig)^n
$$
be the received vector via the symbol-pair read channel. Then the number of {\it pair errors} is defined to be
$$
|\{0\le i\le n-1\mid (u_i,u_{i+1}) \ne (u_i^{\pr},u_{i+1}^{\pr\pr})\}|
$$
where the subscripts are regarded as integers modulo $n$. Similar to the classical error-correcting codes, an $(n,M,d_p)_q$ symbol-pair code can correct up to $\lf \frac{d_p-1}{2}\rf$ pair errors \cite[Proposition 3]{CB}. Hence, given $q$, $n$ and $M$, we aim to construct symbol-pair codes with $d_p$ as large as possible. To this end, we want to take advantage of the fruitful results concerning classical error-correcting codes. A first step is to understand the connection between symbol-pair codes and classical error-correcting codes.

The pair-distance was first introduced in \cite{CB10,CB}, which has been shown to be a well-defined metric. Recall that the Hamming distance between $\bu=(u_0,u_1,\ldots,u_{n-1})$ and $\bv=(v_0,v_1,\ldots,v_{n-1})$ is defined to be
$$
d_H(\bu,\bv)=|\{0 \le i \le n-1 \mid u_i\ne v_i\}|.
$$
In order to build a connection between the pair-distance and the Hamming distance, we need the following definition.

\begin{definition}
Let $S$ be a subset of $\{0,1,\ldots,n-1\}$. Thus, the elements of $S$ can be regarded as elements of $\Z_n$, the ring of integers modulo $n$. $S$ can be partitioned into a union of subsets, such that each subset consists of elements of $\Z_n$, which are consecutive in the sense of modulo $n$. Clearly, the partition of $S$ with smallest number of subsets is unique. Therefore, we define $L(S)$ to be the number of subsets in this unique partition.
\end{definition}

The following proposition reveals the connection between the pair-distance and the Hamming distance.

\begin{proposition}{\rm \cite[Proposition 1 and Theorem 2]{CB}}
Let $\bu=(u_0,u_1,\ldots,u_{n-1})$ and $\bv=(v_0,v_1,\ldots,v_{n-1})$ be two vectors of $\Sig^n$ with $0<d_H(\bu,\bv)<n$. Define $S=\{0 \le i\le n-1\mid u_i\ne v_i\}$. Then
$$
d_P(\bu,\bv)=d_H(\bu,\bv)+L(S).
$$
Therefore, we have $L(S)=d_P(\bu,\bv)-d_H(\bu,\bv) \le n-d_H(\bu,\bv)$. Together with $1 \le L(S) \le d_H(\bu,\bv)$, we have
$$
d_H(\bu,\bv)+1 \le d_P(\bu,\bv)\le \min\{2d_H(\bu,\bv),n\}.
$$
In addition,
$$
d_P(\bu,\bv)=\begin{cases}
  0 & \mbox{if $d_H(\bu,\bv)=0$}, \\
  n & \mbox{if $d_H(\bu,\bv)=n$}.
\end{cases}
$$
\end{proposition}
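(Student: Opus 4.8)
The plan is to reduce the statement to a purely combinatorial counting fact about the set $S\subset\Z_n$ and its decomposition into arcs. The first step is to notice that the pair $(u_i,u_{i+1})$ differs from $(v_i,v_{i+1})$ precisely when $u_i\ne v_i$ or $u_{i+1}\ne v_{i+1}$, i.e. when $i\in S$ or $i+1\in S$ (indices modulo $n$). Hence the set of positions contributing to $d_P(\bu,\bv)$ is exactly $S\cup(S-1)$, where $S-1=\{\,s-1\pmod n\mid s\in S\,\}$, and therefore $d_P(\bu,\bv)=|S\cup(S-1)|$. This translation is the conceptual heart of the argument; everything afterward is bookkeeping.

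Next I would evaluate this cardinality by inclusion--exclusion. Since $x\mapsto x-1$ is a bijection of $\Z_n$, we have $|S-1|=|S|$, so $|S\cup(S-1)|=2|S|-|S\cap(S-1)|$. The key observation is that $i\in S\cap(S-1)$ if and only if both $i\in S$ and $i+1\in S$, so $|S\cap(S-1)|$ counts exactly the adjacent pairs lying inside $S$. Writing $S$ as its unique partition into $L(S)$ maximal runs of consecutive residues, with lengths summing to $|S|=d_H(\bu,\bv)$, each run of length $\ell$ contributes $\ell-1$ such adjacent pairs, while distinct runs are separated by gaps and so contribute nothing across their boundaries. Here the standing hypothesis $d_H(\bu,\bv)<n$ is what guarantees $S\ne\Z_n$, ensuring the runs are genuinely separated and that no run wraps all the way around to create a spurious adjacency. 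Summing gives $|S\cap(S-1)|=|S|-L(S)$, whence $d_P(\bu,\bv)=2|S|-(|S|-L(S))=d_H(\bu,\bv)+L(S)$, which is the main identity.

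The remaining inequalities then fall out from elementary estimates on $L(S)$. Because $d_H(\bu,\bv)>0$ forces $S\ne\es$, we get $L(S)\ge1$; because each run contains at least one element, $L(S)\le|S|=d_H(\bu,\bv)$; and because the $L(S)$ cyclically arranged runs are separated by at least $L(S)$ complementary positions, $L(S)\le n-d_H(\bu,\bv)$. Substituting these three bounds into the identity $d_P=d_H+L(S)$ yields $d_H(\bu,\bv)+1\le d_P(\bu,\bv)\le\min\{2d_H(\bu,\bv),n\}$. Finally the two boundary cases are immediate outside the range treated above: if $d_H=0$ then $\bu=\bv$ and every pair agrees, so $d_P=0$; if $d_H=n$ then every coordinate differs, so every pair differs and $d_P=n$.

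I do not expect a genuine obstacle here. The only delicate point is the modular bookkeeping in the run decomposition, namely handling the wrap-around adjacency (position $n-1$ paired with position $0$) consistently and excluding the degenerate case $S=\Z_n$, which the hypothesis $d_H<n$ does for us. Making the identity $|S\cap(S-1)|=|S|-L(S)$ fully precise is a careful but routine argument about arcs in $\Z_n$, and I would present it as the single lemma on which the whole proposition rests.
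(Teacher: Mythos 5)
Your proof is correct, and it is worth noting that the paper itself offers no proof of this proposition at all: it is quoted verbatim with a citation to Cassuto--Blaum, so your argument is being measured against the source's proof rather than anything in this paper. The standard argument in \cite{CB} counts run by run: writing $S$ as $L(S)$ maximal cyclic runs of lengths $\ell_1,\dots,\ell_{L(S)}$, each run of length $\ell$ contributes exactly $\ell+1$ differing pairs (the $\ell$ positions of the run plus the one position immediately preceding it), and since $S\ne\Z_n$ the runs' contributions are disjoint, giving $d_P=\sum_j(\ell_j+1)=d_H+L(S)$. Your route packages the same count as $d_P=|S\cup(S-1)|$ followed by inclusion--exclusion, with the single lemma $|S\cap(S-1)|=|S|-L(S)$; this is equivalent but arguably cleaner, because the identity $d_P=|S\cup(S-1)|$ is immediate from the definitions and the wrap-around bookkeeping is confined to one transparent counting statement about adjacent pairs inside maximal runs. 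You also correctly isolate where the hypothesis $d_H<n$ enters (ruling out $S=\Z_n$, where every position would be an adjacency and the run decomposition degenerates), and your three bounds $1\le L(S)\le d_H$ and $L(S)\le n-d_H$ — the last from the fact that the $L(S)$ gaps between consecutive runs each contain at least one element of the complement — recover the chain $d_H+1\le d_P\le\min\{2d_H,n\}$ exactly as stated. The boundary cases $d_H=0$ and $d_H=n$ are handled correctly and are indeed immediate. No gaps.
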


In particular, for linear symbol-pair codes, we have the following corollary concerning the relation between the Hamming weight and the pair-weight of a codeword.

\begin{corollary}\label{cor-pairweight}
Let $\cC$ be an $(n,M,d_p)_q$ symbol-pair code, which is a linear subspace of $\Fq^n$. For any $\bc=(c_0,c_1,\ldots,c_{n-1}) \in \cC$, define
$$
I(\bc)=L(\{0 \le i \le n-1 \mid c_i \ne 0\}).
$$
Suppose $0<w_H(\bc)<n$, where $w_H(\bc)$ denotes the Hamming weight of $\bc$. Then we have
\begin{equation}\label{eqn-pairweight}
w_P(\bc)=w_H(\bc)+I(\bc).
\end{equation}
Therefore, we have $I(\bc)=w_P(\bc)-w_H(\bc) \le n-w_H(\bc)$. Together with $1 \le I(\bc) \le w_H(\bc)$, we have
$$
w_H(\bc)+1\le w_P(\bc) \le \min\{2w_H(\bc),n\}.
$$
In particular, if the minimum Hamming distance of $\cC$ is $d<n$, then the minimum pair distance
\begin{equation}\label{ineqn-trivial}
d+1 \le d_p \le \min\{2d,n\}.
\end{equation}
\end{corollary}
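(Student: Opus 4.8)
The plan is to specialize the preceding Proposition to the case where one of the two vectors is the zero codeword, exploiting the fact that $\cC$ is linear and hence contains $\mathbf{0}=(0,0,\ldots,0)$. Concretely, I would set $\bv=\mathbf{0}$. Then $w_H(\bc)=d_H(\bc,\mathbf{0})$ holds by the definition of Hamming weight, and $w_P(\bc)=d_P(\bc,\mathbf{0})$ holds by the definition of pair-weight, since the condition $(c_i,c_{i+1})\ne(0,0)$ is exactly the condition $(c_i,c_{i+1})\ne(v_i,v_{i+1})$ when $\bv=\mathbf{0}$. Moreover, the support set $S=\{0\le i\le n-1\mid c_i\ne 0\}$ used to define $I(\bc)$ coincides with the set $\{0\le i\le n-1\mid c_i\ne v_i\}$ appearing in the Proposition, so that $I(\bc)=L(S)$ is literally the quantity occurring there.

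With this substitution, the hypothesis $0<w_H(\bc)<n$ becomes $0<d_H(\bc,\mathbf{0})<n$, which is precisely what the Proposition demands. The identity $w_P(\bc)=w_H(\bc)+I(\bc)$ of equation (\ref{eqn-pairweight}) is then an immediate transcription of $d_P=d_H+L(S)$, and the chain $w_H(\bc)+1\le w_P(\bc)\le\min\{2w_H(\bc),n\}$ follows verbatim from the corresponding inequalities in the Proposition.

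For the final statement about the minimum pair-distance, I would invoke the fact (recalled in the text) that for a linear code the minimum pair-distance equals the minimum pair-weight taken over nonzero codewords. For the lower bound $d+1\le d_p$, I would show that every nonzero codeword $\bc$ satisfies $w_P(\bc)\ge d+1$: if $0<w_H(\bc)<n$, this follows from $w_P(\bc)\ge w_H(\bc)+1\ge d+1$; and if $w_H(\bc)=n$, then the Proposition's full-support evaluation gives $w_P(\bc)=n\ge d+1$ because $d<n$. For the upper bound $d_p\le\min\{2d,n\}$, I would pick a codeword $\bc$ attaining the minimum Hamming weight $d$; since $0<d<n$, the weight formula applies and yields $d_p\le w_P(\bc)\le\min\{2d,n\}$.

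The argument is essentially a direct reduction to the Proposition, so there is little difficulty of substance. The only point requiring genuine care is the boundary case $w_H(\bc)=n$ in the minimum-distance bound: because the weight identity (\ref{eqn-pairweight}) is asserted only under $0<w_H(\bc)<n$, one must separately appeal to the Proposition's evaluation $d_P=n$ in the full-support case to confirm that such codewords do not undercut the lower bound $d_p\ge d+1$.
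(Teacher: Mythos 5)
Your proposal is correct and is exactly the argument the paper intends: the corollary is stated as an immediate specialization of the preceding Proposition with $\bv=\mathbf{0}$ (available since $\cC$ is linear), which is precisely your reduction. Your extra care with the boundary case $w_H(\bc)=n$, handled via the Proposition's evaluation $d_P=n$, is a sound and welcome refinement of a step the paper leaves implicit.
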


Similar to classical error-correcting codes, there are several bounds providing fundamental restrictions on the parameters of symbol-pair codes. One of them is the following Singleton-type bound.

\begin{proposition}{\rm \cite[Theorem 2.1]{CJKWY}}
Let $q \ge 2$ and $2 \le d \le n$. If $\cC$ is an $(n,M,d_p)_q$ symbol-pair code, then $M\le q^{n-d_p+2}$.
\end{proposition}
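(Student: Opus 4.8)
The plan is to mimic the classical Singleton-bound argument by exhibiting a coordinate projection that is injective on $\cC$. Note that one cannot simply apply the ordinary Singleton bound to the symbol-pair read vectors: although $\pi$ is injective and $d_P$ on $\cC$ equals the Hamming distance of $\pi(\cC)$, the alphabet of $\pi(\cC)$ has size $q^2$, so the classical bound would only give the far weaker $M\le q^{2(n-d_p+1)}$. Instead I would work directly with the original codewords. Consider the map $\phi\colon \cC\to\Sig^{\,n-d_p+2}$ that discards the last $d_p-2$ symbols, sending $\bu=(u_0,\ldots,u_{n-1})$ to $(u_0,\ldots,u_{n-d_p+1})$. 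If $\phi$ is injective, then $M=|\cC|=|\phi(\cC)|\le q^{\,n-d_p+2}$, which is exactly the claim. (I read the hypothesis $2\le d\le n$ as $2\le d_p\le n$, so that $n-d_p+2\ge 2$ and $\phi$ is well defined; note also that no field structure is needed, only $|\Sig|=q$.)

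To prove injectivity, suppose $\bu,\bv\in\cC$ satisfy $\phi(\bu)=\phi(\bv)$, so $u_i=v_i$ for all $0\le i\le n-d_p+1$. Then the set $D$ of coordinates on which $\bu$ and $\bv$ differ is contained in the window $W=\{n-d_p+2,\ldots,n-1\}$, a block of $d_p-2$ consecutive positions that does not wrap around modulo $n$. The key computation is to count the differing pair-coordinates directly from the definition: the pair $(u_i,u_{i+1})$ differs from $(v_i,v_{i+1})$ precisely when $i\in D$ or $i+1\in D$, that is, when $i\in D\cup(D-1)$, where $D-1=\{j-1\bmod n : j\in D\}$. Hence $d_P(\bu,\bv)=|D\cup(D-1)|$.

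Now $D-1\subseteq\{n-d_p+1,\ldots,n-2\}$, so $D\cup(D-1)\subseteq\{n-d_p+1,\ldots,n-1\}$, a set of only $d_p-1$ elements, giving $d_P(\bu,\bv)\le d_p-1$. If $\bu\ne\bv$ then $D\ne\es$, so $1\le d_P(\bu,\bv)\le d_p-1<d_p$, contradicting the definition of $d_p$ as the minimum pair-distance. Therefore $\bu=\bv$, $\phi$ is injective, and the bound follows.

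I do not anticipate a genuine obstacle; the argument is elementary once the right projection is chosen. The one point demanding care is the bookkeeping of indices modulo $n$: because the retained window of $d_p-2$ deleted positions sits at the tail of the codeword, the set $D\cup(D-1)$ stays inside $\{n-d_p+1,\ldots,n-1\}$ and never wraps past $0$, which is exactly what forces the count down to $d_p-1$ rather than $d_p$. This is also the conceptual source of the extra $+1$ relative to the classical bound $M\le q^{\,n-d+1}$: deleting a block of symbols corrupts the pairs at both ends of the block, so a block of $d_p-2$ symbols affects only $d_p-1$ pairs. I would finally verify the boundary case $d_p=n$, where $\phi$ keeps just the first two symbols and the window is $\{2,\ldots,n-1\}$, to confirm the count is unchanged.
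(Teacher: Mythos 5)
Your proof is correct: the projection deleting the last $d_p-2$ coordinates is injective because two codewords agreeing elsewhere differ only inside a non-wrapping block $D\subseteq\{n-d_p+2,\ldots,n-1\}$, which corrupts only the pairs indexed by $D\cup(D-1)\subseteq\{n-d_p+1,\ldots,n-1\}$, at most $d_p-1$ of them. The paper itself gives no proof (it cites \cite[Theorem 2.1]{CJKWY}), and your argument is essentially the standard puncturing proof from that source, including the correct reading of the hypothesis $2\le d\le n$ as a typo for $2\le d_p\le n$.
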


The symbol-pair code $\cC$ achieving this Singleton-type bound is called an MDS symbol-pair code. We denote it by an $(n,d_p)_q$ MDS symbol-pair code. Below, we focus on the direct construction of MDS symbol-pair codes. In fact, classical MDS codes directly generate MDS symbol-pair codes.

\begin{proposition}{\rm \cite[Proposition 3.1]{CJKWY}}\label{prop-MDS}
If $\cC$ is an MDS code, then $\cC$ is an MDS symbol-pair code. Moreover, if $\cC$ is an $[n,n-d+1,d]_q$ MDS code with $d<n$, then $\cC$ is an $(n,d+1)_q$ MDS symbol-pair code.
\end{proposition}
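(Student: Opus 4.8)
The plan is to reduce the claim to a single statement about Hamming weights and then invoke the defining property of MDS codes. Throughout, write $k=n-d+1=\dim\cC$, so that $M=|\cC|=q^{k}$. Since $d<n$, Corollary~\ref{cor-pairweight} already supplies the lower bound $d_p\ge d+1$. Observe, moreover, that the equality $d_p=d+1$ would automatically make $\cC$ an MDS symbol-pair code: the Singleton-type bound reads $M\le q^{\,n-d_p+2}$, and substituting $d_p=d+1$ gives $q^{\,n-d_p+2}=q^{\,n-d+1}=q^{k}=M$, so the bound is met with equality. Hence the entire proof collapses to producing the upper bound $d_p\le d+1$, i.e.\ exhibiting a single nonzero codeword of pair-weight exactly $d+1$.

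By the pair-weight identity \eqref{eqn-pairweight}, namely $w_P(\bc)=w_H(\bc)+I(\bc)$, a codeword attains pair-weight $d+1$ as soon as it has minimum Hamming weight $w_H(\bc)=d$ and its support forms a single cyclic run, so that $I(\bc)=1$. Thus the task is to find a minimum-weight codeword whose nonzero coordinates occupy $d$ consecutive positions modulo $n$. First I would fix the arc $\cS=\{0,1,\ldots,d-1\}$; since $d<n$ this is a proper subset, and it is a single block, so any nonzero codeword supported within it has $I=1$.

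The heart of the argument is the existence of such a codeword, and here the defining property of MDS codes enters. Let $T=\{d,d+1,\ldots,n-1\}$ be the complement of $\cS$, a set of $n-d=k-1$ coordinates. Because $\cC$ is MDS, every set of $k$ coordinates is an information set, hence the projection of $\cC$ onto any $k-1$ coordinates is surjective; in particular the projection onto the coordinates indexed by $T$ maps $\cC$ onto $\Fq^{\,k-1}$. Its kernel---the subcode of all codewords vanishing on $T$---therefore has dimension $k-(k-1)=1$. Picking any nonzero $\bc$ in this kernel, we have $\supp(\bc)\subseteq\cS$, so $w_H(\bc)\le|\cS|=d$; but $\bc\neq 0$ forces $w_H(\bc)\ge d$, whence $w_H(\bc)=d$ and $\supp(\bc)=\cS$ exactly.

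Combining the pieces, this $\bc$ has $w_H(\bc)=d$ and $I(\bc)=1$, so $w_P(\bc)=d+1$, giving $d_p\le d+1$; together with the lower bound this yields $d_p=d+1$ and, as noted, the Singleton-type bound is attained, so $\cC$ is an $(n,d+1)_q$ MDS symbol-pair code. The first sentence of the proposition is then the existential summary of this conclusion over all MDS codes in the meaningful range $d<n$. I expect the only genuine obstacle to be the middle step---guaranteeing a minimum-weight codeword with prescribed consecutive support---which is precisely where the information-set characterization of MDS codes is indispensable; the surrounding manipulations are routine applications of Corollary~\ref{cor-pairweight}.
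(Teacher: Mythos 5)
Your proof is correct, but for the key step it takes a genuinely different route from the one behind the paper's citation. The paper quotes this result from \cite{CJKWY} without proof, and the standard argument there is a pure counting one: since an $[n,n-d+1,d]_q$ MDS code has $M=q^{n-d+1}$, the Singleton-type bound $M\le q^{n-d_p+2}$ itself forces $n-d+1\le n-d_p+2$, i.e. $d_p\le d+1$, which combined with $d_p\ge d+1$ from Corollary~\ref{cor-pairweight} gives $d_p=d+1$ and equality in the bound simultaneously --- no codeword need be exhibited. Your kernel construction is therefore logically superfluous for the statement itself, but it is sound (projection of an MDS code onto any $k-1$ coordinates is surjective, so by rank--nullity the subcode vanishing there is one-dimensional, and any nonzero member has support exactly the complementary run of $d$ consecutive positions) and it buys strictly more: an explicit minimum-weight codeword with $I(\bc)=1$ realizing $w_P(\bc)=d+1$. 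This makes visible the structural reason MDS codes can never exceed pair-distance $d+1$ --- their minimum-weight codewords occupy every $d$-set of coordinates, including consecutive ones --- and it is exactly the configuration (low-weight codewords with small $I(\bc)$) that the paper must laboriously rule out for its non-MDS cyclic codes in Theorems~\ref{thm-six-two} and~\ref{thm-seven}. One small point: you were right to restrict your closing gloss to $d<n$; for $d=n$ (e.g. the repetition code, with $M=q$ and $d_p=n$) the bound $q^{n-d_p+2}=q^2$ is not attained, an edge case inherited from the cited statement rather than a defect of your argument.
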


Together with the knowledge concerning classical MDS codes, the above proposition implies that we have known a systematic construction for $(n,d_p)_q$ MDS symbol-pair codes with $q$ being a prime power and $2 \le d_p \le n \le q+1$. Below, we will focus on the construction of $(n,d_p)_q$ MDS symbol-pair codes with $q$ being a prime power and $n>q+1$.

We observe that if $\cC$ is a constacyclic code and is not MDS, then the lower bound in (\ref{ineqn-trivial}) can be improved.

\begin{proposition}\label{prop-consta}
Let $\cC$ be an $[n,k,d]_q$ constacyclic code with generator polynomial $g(x)$ and $d \le n-k$. Let $c(x) \in \cC$ be a codeword with Hamming weight $d^{\pr} \le n-k$. Then we have $I(c(x)) \ge 2$ and $w_P(c(x))\ge d^{\pr}+2$. In particular, $\cC$ is an $(n,q^k,d_p)_q$ symbol-pair code with $d_p \ge d+2$.
\end{proposition}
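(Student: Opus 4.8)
The plan is to reduce the whole statement to the single claim $I(c(x)) \ge 2$ and then to read off the pair-weight bound from Corollary~\ref{cor-pairweight}. Since $c(x)$ is a nonzero codeword with $w_H(c(x)) = d^{\pr} \le n-k < n$, formula~(\ref{eqn-pairweight}) applies and gives $w_P(c(x)) = d^{\pr} + I(c(x))$; thus $I(c(x)) \ge 2$ immediately yields $w_P(c(x)) \ge d^{\pr}+2 \ge d+2$, which is the assertion about $c(x)$.

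To prove $I(c(x)) \ge 2$ I would argue by contradiction, assuming $I(c(x)) = 1$, so that the $d^{\pr}$ nonzero coordinates of $c(x)$ form a single block of consecutive positions modulo $n$, say $\{i, i+1, \ldots, i+d^{\pr}-1\}$. The main tool is that $x$ is a unit of $\Fq[x]/(x^n-\om)$, with inverse $\om^{-1}x^{n-1}$, so multiplication by $x$ --- the constacyclic shift --- maps the ideal $\cC$ bijectively onto itself while cyclically rotating the coordinate positions and scaling the wrapped-around entry by the nonzero constant $\om$. In particular this operation changes neither the Hamming weight nor the cyclic run structure. Multiplying $c(x)$ by a suitable power of $x$, I may therefore assume the support of $c(x)$ is exactly $\{0,1,\ldots,d^{\pr}-1\}$, so that $c(x)$ becomes a genuine polynomial of degree $d^{\pr}-1$ with nonzero constant term.

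The decisive step is then a degree comparison. Because $g(x) \mid x^n-\om$, a polynomial of degree below $n$ belongs to $\cC$ exactly when $g(x)$ divides it in $\Fq[x]$, so here $g(x) \mid c(x)$. But the hypothesis $d^{\pr} \le n-k$ forces $\deg c(x) = d^{\pr}-1 \le n-k-1 < n-k = \deg g(x)$, and a nonzero polynomial cannot be a multiple of one of strictly larger degree. This contradicts $c(x) \ne 0$, establishing $I(c(x)) \ge 2$.

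For the final ``in particular'' clause I would check $w_P(\bc) \ge d+2$ for every nonzero $\bc \in \cC$ by splitting on its Hamming weight. If $w_H(\bc) \le n-k$ the argument above applies verbatim; if $n-k < w_H(\bc) < n$ then Corollary~\ref{cor-pairweight} gives $w_P(\bc) \ge w_H(\bc)+1 \ge n-k+2 \ge d+2$; and the residual case $w_H(\bc)=n$ gives $w_P(\bc)=n \ge d+2$ unless $d \ge n-1$, which by $d \le n-k$ can only happen for a one-dimensional code, whose nonzero codewords all share the common weight $d=n-1<n$, making a weight-$n$ codeword impossible. I expect the main obstacle to be the careful bookkeeping in the shift step: one must confirm that the wraparound scalar $\om$ leaves the support pattern intact, so that after shifting $c(x)$ really is an honest low-degree polynomial to which the degree comparison can be applied.
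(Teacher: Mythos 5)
Your proof is correct and follows essentially the same route as the paper's: reduce everything to $I(c(x))\ge 2$, assume the support is a single consecutive run, shift it to $\{0,1,\ldots,d^{\pr}-1\}$, and derive a contradiction from $g(x)\mid c(x)$ together with $\deg c(x)=d^{\pr}-1<n-k=\deg g(x)$. The only difference is that you spell out what the paper leaves implicit --- justifying the ``without loss of generality'' by observing that the constacyclic shift is multiplication by the unit $x$, and verifying the final clause $d_p\ge d+2$ via the case split on $w_H(\bc)$, including the boundary case $w_H(\bc)=n$.
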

\begin{proof}
It suffices to show that $I(c(x))\ge 2$, which implies $w_P(c(x))\ge d^{\pr}+2$ by (\ref{eqn-pairweight}). Otherwise, we must have $I(c(x))=1$. This implies the indices of nonzero entries in $c(x)$ form one consecutive subset. Without loss of generality, we can assume that $c(x)=\sum_{i=0}^{d^{\pr}-1}c_ix^i$, where $c_i \in \Fq^*$ for each $0 \le i \le d^{\pr}-1$. Note that $g(x) \mid c(x)$. This leads to a contradiction since $\deg(g(x))=n-k \ge d^{\pr} > \deg(c(x))$. Therefore, we have $w_P(c(x))\ge d^{\pr}+2$. In particular, since $\cC$ is a linear code, the minimum pair-distance of $\cC$ equals its minimum nonzero pair-weight. Since $d\le n-k$, we can easily see that $d_p \ge d+2$ by Corollary \ref{cor-pairweight}.
\end{proof}

This proposition is an essential ingredient for the constructions in \cite{KZL} (see \cite[Lemma 5]{KZL}). In the following, we will employ cyclic and constacyclic codes to generate MDS symbol-pair codes.

\section{New constructions of MDS symbol-pair codes}\label{sec3}

Let $q$ be a prime power and $n$ be a positive integer. In this section, we are going to construct $(n,d_p)_q$ MDS symbol-pair codes with $d_p \in \{5,6,7\}$.

First, we consider the construction of MDS symbol-pair codes with $d_p=5$, which extends the results of {\rm \cite[Theorem 16]{KZL}} and {\rm \cite[Theorem 19]{KZL}}.

\begin{theorem}\label{thm-five}
Let $q$ be a prime power. Let $n$ and $r$ be two positive integers such that
$$
r \mid q-1, \; nr \mid q^3-1, \; nr \nmid q-1, \; (\frac{q-1}{r},n)=1.
$$
Then there exists an $(n,5)_q$ MDS symbol-pair code.
\end{theorem}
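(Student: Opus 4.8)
The plan is to realise the desired code as an $\om$-constacyclic code of length $n$ and dimension $n-3$, generated by the minimal polynomial of a primitive $nr$-th root of unity, and then to deduce the MDS symbol-pair property from a lower bound $d\ge 3$ on its Hamming distance, combined with Propositions~\ref{prop-consta} and~\ref{prop-MDS} and the Singleton-type bound. Indeed, once the generator polynomial has degree $3$ the code has $M=q^{n-3}$, and the Singleton-type bound $M\le q^{n-d_p+2}$ forces $d_p\le 5$; so everything reduces to proving $d_p\ge 5$, and this will follow the moment the Hamming distance is known to be at least $3$.

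First I would fix the ambient field. Since $nr\mid q^3-1$, the multiplicative order of $q$ modulo $nr$ divides $3$, and since $nr\nmid q-1$ it is not $1$; as $3$ is prime the order equals $3$, so $m=3$ is the smallest integer with $nr\mid q^m-1$ (and moreover $nr\nmid q^2-1$). Pick $\om\in\Fq^*$ of order $r$ (possible because $r\mid q-1$) and, as in Proposition~\ref{prop-BCH}, choose $\de\in\F_{q^3}^*$ of order $nr$ with $\om=\de^n$, and set $\xi=\de^r$, an element of order $n$. Let $g(x)$ be the minimal polynomial of $\de$ over $\Fq$, i.e.\ $g(x)=\prod_{j\in C_1}(x-\de^j)$ with $C_1=\{1,q,q^2\}\pmod{nr}$, and put $\cC=\lan g(x)\ran$. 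Because the order of $q$ modulo $nr$ is exactly $3$, the integers $1,q,q^2$ are pairwise distinct modulo $nr$, so $|C_1|=3$ and $\deg g=3$. Hence $\cC$ is an $\om$-constacyclic $[n,n-3]_q$ code with $M=q^{n-3}$; note $(n,q)=1$ since $(nr,q)=1$, so $g$ is squarefree.

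Next I would bound the Hamming distance from below via the BCH-type estimate of Proposition~\ref{prop-BCH}. The key computation is that the two conjugate roots $\de$ and $\de^q$ of $g$ are consecutive in the $\xi$-progression: writing $l=\frac{q-1}{r}$ (an integer since $r\mid q-1$), we have $\de=\de\xi^{0}$ and $\de^q=\de\cdot\de^{q-1}=\de(\de^r)^{(q-1)/r}=\de\xi^{l}$. Thus $\de\xi^{li}$ is a root of $g$ for $i=0,1$, and the hypothesis $(\frac{q-1}{r},n)=1$ says precisely that $(l,n)=1$. Applying Proposition~\ref{prop-BCH} with $d=2$ and $b=0$ gives minimum Hamming distance $d\ge 3$. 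Since the classical Singleton bound gives $d\le n-k+1=4$, we have $d\in\{3,4\}$: if $d=4$ then $\cC$ is MDS and Proposition~\ref{prop-MDS} yields an $(n,5)_q$ MDS symbol-pair code, while if $d=3=n-k$ then Proposition~\ref{prop-consta} gives $d_p\ge d+2=5$, which with $d_p\le 5$ yields $d_p=5$. In either case $\cC$ is an $(n,5)_q$ MDS symbol-pair code.

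The main obstacle, and the place where all four arithmetic hypotheses are used, is the two-sided control of the generator polynomial's degree and roots. On one side, $nr\nmid q-1$ (equivalently, the order of $q$ being exactly $3$) is what guarantees $\deg g=3$, hence dimension $n-3$ and $d_p\le 5$; if this failed the coset $C_1$ could collapse and the dimension would be wrong. On the other side, $r\mid q-1$ together with $(\frac{q-1}{r},n)=1$ is exactly what makes $\de$ and $\de^q$ a \emph{unit-step} consecutive pair in the BCH progression, forcing $d\ge 3$; without the coprimality one could not take $l$ to be a unit modulo $n$ and Proposition~\ref{prop-BCH} would not apply. Translating the number-theoretic conditions into these two structural facts is the crux; the remaining verifications (that $\de,\xi$ exist with the stated orders, squarefreeness, and the bookkeeping for full-weight codewords, valid once $n\ge 5$) are routine.
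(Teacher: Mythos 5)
Your proposal is correct and follows essentially the same route as the paper's own proof: the same $\om$-constacyclic code generated by the degree-three minimal polynomial of $\de$, the same application of Proposition~\ref{prop-BCH} with $l=\frac{q-1}{r}$ to get $d\ge 3$, and the same dichotomy $d\in\{3,4\}$ resolved by Propositions~\ref{prop-MDS} and~\ref{prop-consta}. You merely make explicit some details the paper leaves implicit (that the order of $q$ modulo $nr$ is exactly $3$, hence $\deg g=3$, and the computation $\de^q=\de\xi^{l}$), which is fine.
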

\begin{proof}
Let $\om \in \Fq^*$ be an element of order $r$. Let $\de \in \F_{q^3}^*$ be an element of order $nr$, such that $\de^n=\om$. Since $nr \nmid q-1$, we have $\de \in \F_{q^3}^* \setminus \Fq$, and the polynomial $g(x)=(x-\de)(x-\de^q)(x-\de^{q^2}) \in \Fq[x]$ divides $x^n-\om$. Let $\cC$ be the $\om$-constacyclic code $\lan g(x) \ran \subset\Fq[x]/(x^n-\om)$. Employing Proposition~\ref{prop-BCH} with $l=\frac{q-1}{r}$, we have the minimum distance of $\cC$ is at least three. In addition, by the Singleton bound, $\cC$ is an $[n,n-3,d]_q$ code with $3 \le d \le 4$. A direct application of Proposition~\ref{prop-MDS} and Proposition~\ref{prop-consta} shows that $\cC$ is an $(n,5)_q$ MDS symbol-pair code.
\end{proof}

\begin{remark}
By {\rm \cite[Corollary 7.4.4]{HP}}, when $n>2(q-1)$, the code $\cC$ in the above theorem must have minimum distance $3$. In addition, when $n=q^2+q+1$, $\cC$ is simply the Hamming code with minimum distance $3$. In this case, $\cC$ also achieves the pair-sphere packing bound {\rm \cite[Theorem 19]{CB}}.
\end{remark}

Next, we provide two constructions of MDS symbol-pair codes with $d_p=6$. The first one extends the results of {\rm \cite[Theorem 12]{KZL}} and {\rm \cite[Theorem 13]{KZL}}.

\begin{theorem}\label{thm-six-one}
Let $q$ be a prime power. Let $n$ and $r$ be two integers such that
$$
r \mid q-1, \; nr \mid (q-1)(q^2+1), \; nr \nmid q^2-1, \; (\frac{q-1}{r},n)=1.
$$
Then there exists an $(n,6)_q$ MDS symbol-pair code.
\end{theorem}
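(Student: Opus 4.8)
The plan is to mimic the proof of Theorem~\ref{thm-five}, producing a suitable $\om$-constacyclic code of dimension $n-4$ whose minimum Hamming distance is either $4$ or $5$, and then invoking Proposition~\ref{prop-MDS} and Proposition~\ref{prop-consta} to upgrade it to an $(n,6)_q$ MDS symbol-pair code. Concretely, I let $\om\in\Fq^*$ have order $r$, pick $\de\in\F_{q^4}^*$ of order $nr$ with $\de^n=\om$, take $g(x)$ to be the minimal polynomial of $\de$ over $\Fq$, and set $\cC=\lan g(x)\ran$. First I must check the arithmetic that makes this work: since $r\mid q-1$ and $\gcd(\frac{q-1}{r},n)=1$, the hypothesis $nr\mid(q-1)(q^2+1)$ forces $n\mid q^2+1$; moreover $(q-1)(q^2+1)\mid q^4-1$ gives $nr\mid q^4-1$, while $nr\nmid q^2-1$ (whence also $nr\nmid q-1$ and $nr\nmid q^3-1$, using $\gcd(q^3-1,q^4-1)=q-1$), so the least $m$ with $nr\mid q^m-1$ is $m=4$. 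Thus $\de$ has degree $4$ over $\Fq$, the $q$-cyclotomic coset of $1$ modulo $nr$ is $\{1,q,q^2,q^3\}$, $g(x)$ has degree $4$, and $\cC$ is an $[n,n-4,d]_q$ constacyclic code. Since $d\le 5$ by the Singleton bound, everything reduces to proving $d\ge 4$.

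Applying Proposition~\ref{prop-BCH} with $l=\frac{q-1}{r}$ to the two consecutive roots $\de,\de^q$ yields only $d\ge 3$: written in terms of $\xi=\de^r$, the root exponents occupy the positions $\frac{q-1}{r}\{0,1,q,q+1\}\pmod n$, and $\{0,1,q,q+1\}$ contains no three-term arithmetic progression modulo $n$ for general $q$, so the BCH-type bound cannot reach $d\ge 4$. I therefore expect the real work, and the main obstacle, to be a direct proof that $\cC$ has no codeword of Hamming weight $3$. The structural fact I would extract first is that $\de^{q^2+1}$ has order dividing $r\mid q-1$, so $\mu:=\de^{q^2+1}\in\Fq^*$ and hence $\de^{q^2}=\mu\de^{-1}$.

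To rule out weight $3$, suppose $a\de^i+b\de^j+c\de^k=0$ with $a,b,c\in\Fq^*$ and distinct $i,j,k\in\{0,\dots,n-1\}$ (any vanishing coefficient would give a weight-$\le 2$ word, already excluded). Normalising gives $A\de^u+B\de^v+1=0$ with $A,B\in\Fq^*$ and $u=i-k,\,v=j-k$ nonzero modulo $n$. Applying the $q^2$-power Frobenius and using $\de^{q^2}=\mu\de^{-1}$ produces a second relation $A\mu^u\de^{-u}+B\mu^v\de^{-v}+1=0$; eliminating $\de^u$ between the two yields a genuine quadratic over $\Fq$ with leading coefficient $B\ne 0$ satisfied by $\de^v$, so $\de^v\in\F_{q^2}$, and symmetrically $\de^u\in\F_{q^2}$. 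Translating these memberships into orders gives $nr\mid u(q^2-1)$ and $nr\mid v(q^2-1)$. Now I invoke $n\mid q^2+1$: since $\gcd(q^2+1,q^2-1)\mid 2$, we have $\gcd(n,q^2-1)\in\{1,2\}$, so $\frac{n}{\gcd(n,q^2-1)}$ divides both $u$ and $v$. As $0<|u|,|v|<n$, this forces $\gcd(n,q^2-1)=2$ and $u\equiv v\equiv\frac n2\pmod n$, whence $i\equiv j\pmod n$, contradicting distinctness. Hence $d\ge 4$, so $d\in\{4,5\}$. If $d=5$ the code is MDS and Proposition~\ref{prop-MDS} finishes; if $d=4=n-k$ then Proposition~\ref{prop-consta} gives $d_p\ge 6$, matched by the symbol-pair Singleton bound $d_p\le 6$, so in either case $\cC$ is an $(n,6)_q$ MDS symbol-pair code.
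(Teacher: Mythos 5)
Your proof is correct, and it constructs exactly the same code as the paper (the $\om$-constacyclic code generated by the degree-$4$ minimal polynomial of $\de$, with $d\ge 3$ from Proposition~\ref{prop-BCH} at $l=\frac{q-1}{r}$ and the same endgame via Propositions~\ref{prop-MDS} and~\ref{prop-consta}), but the crucial step---ruling out a weight-$3$ codeword---is executed by a genuinely different computation. The paper normalizes the relation to $1+a_i\de^i+a_j\de^j=0$, raises $1+a_i\de^i=-a_j\de^j$ to the power $(q-1)(q^2+1)$ to get $(1+a_i\de^i)^{q(q^2+1)}=(1+a_i\de^i)^{q^2+1}$, and after using $q^3+q\equiv q^2+1\pmod{nr}$ factors the result as $(\de^{(q-1)i}-1)(\de^{(q^2-q)i}-1)=0$; since $(nr,q)=1$ both branches give $nr\mid(q-1)i$, i.e.\ $\de^i\in\Fq^*$, which $(\frac{q-1}{r},n)=1$ forbids for $1\le i\le n-1$. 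You instead encode the hypothesis $nr\mid(q-1)(q^2+1)$ as $\mu:=\de^{q^2+1}\in\Fq^*$, apply the $q^2$-power Frobenius to the three-term relation, and eliminate to obtain quadratics over $\Fq$ (e.g.\ $B\de^{2v}+(1+B^2\mu^v-A^2\mu^u)\de^v+B\mu^v=0$ with leading coefficient $B\ne0$), concluding only the weaker membership $\de^u,\de^v\in\Fqt$; you then compensate with the derived divisibility $n\mid q^2+1$ (this is where you consume $r\mid q-1$ and $(\frac{q-1}{r},n)=1$) and a gcd/$2$-adic case analysis: $\gcd(n,q^2-1)\le 2$, and the case $\gcd=2$ forces $u\equiv v\equiv \frac{n}{2}\pmod n$, contradicting $u\ne v$ --- all of which checks out, including the implicit fact that $v_2(n)\le 1$ when $q$ is odd. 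In short, the paper's power-raising trick yields the stronger conclusion $\de^i\in\Fq^*$ and an immediate contradiction, while your route is more structural (conjugate relation plus a degree argument) and makes visible where each hypothesis enters, at the cost of the final case split; notably your argument needs both difference-exponents $u$ and $v$ to land in $\Fqt$, whereas the paper needs only one. One inessential slip: the $\xi^l$-exponents of the roots of $g(x)$ are $\{0,1,q+1,q^2+q+1\}$ modulo $n$, not $\{0,1,q,q+1\}$; this affects only your side remark about why the BCH-type bound stops at $d\ge3$, not the main argument, since you prove $d\ge4$ directly.
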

\begin{proof}
Let $\om \in \Fq^*$ be an element of order $r$. Let $\de \in \F_{q^4}^*$ be an element of order $nr$, such that $\de^n=\om$. Since $nr \nmid q^2-1$, we have $\de \in \F_{q^4}^* \setminus \Fqt$, and the polynomial $g(x)=(x-\de)(x-\de^q)(x-\de^{q^2})(x-\de^{q^3}) \in \Fq[x]$ divides $x^n-\om$. Let $\cC$ be the $\om$-constacyclic code $\lan g(x) \ran \subset\Fq[x]/(x^n-\om)$. Employing Proposition~\ref{prop-BCH} with $l=\frac{q-1}{r}$, we have that the minimum distance of $\cC$ is at least three. In addition, by the Singleton bound, $\cC$ is an $[n,n-4,d]_q$ code with $3 \le d \le 5$. Below, we are going to show that $d \ne 3$.

Assume the minimum distance of $\cC$ is three. Without loss of generality, we have a codeword $1+a_ix^i+a_jx^j$, where $1 \le i,j \le n-1$, $i \ne j$ and $a_i,a_j \in \Fq^*$. Thus, we have $1+a_i\de^i+a_j\de^j=0$. Since $nr \mid (q-1)(q^2+1)$, we get
$$
(1+a_i\de^i)^{(q-1)(q^2+1)}=(-a_j\de^j)^{(q-1)(q^2+1)}=1,
$$
which implies that $(1+a_i\de^i)^{q(q^2+1)}=(1+a_i\de^i)^{(q^2+1)}$. A direct computation leads to $\de^{qi}+\de^{q^3i}+a_i\de^{(q^3+q)i}=\de^i+\de^{q^2i}+a_i\de^{(q^2+1)i}$. Since $q^3+q \equiv q^2+1 \pmod{nr}$, we have $\de^{q^3+q}=\de^{q^2+1}$ and $\de^{q^3-1}=\de^{q^2-q}$. Consequently, we have $\de^{(q-1)i}+\de^{(q^3-1)i}=1+\de^{(q^2-1)i}$. Noting that $\de^{q^3-1}=\de^{q^2-q}$, we have $\de^{(q-1)i}+\de^{(q^2-q)i}=1+\de^{(q^2-1)i}$, which implies
$$
(\de^{(q-1)i}-1)(\de^{(q^2-q)i}-1)=0.
$$
This forces that $nr \mid (q-1)i$ for some $1 \le i \le n-1$. However, since $(\frac{q-1}{r},n)=1$, this is impossible.

Hence, the minimum distance of $\cC$ is either four or five. It is easily followed from Proposition~\ref{prop-MDS} and Proposition~\ref{prop-consta} that $\cC$ is an $(n,6)_q$ MDS symbol-pair code.
\end{proof}

When $n \mid q^2-1$, we have the following construction of $(n,6)_q$ MDS symbol-pair codes.

\begin{theorem}\label{thm-six-two}
Let $q$ be a prime power and $n$ be an integer with $n > q+1$ and $n \mid q^2-1$. Then
\begin{itemize}
\item[1)] There exists an $(n,6)_q$ MDS symbol-pair code when $n$ is odd.
\item[2)] There exists an $(\frac{n}{2},6)_q$ MDS symbol-pair code when $n$ is even.
\end{itemize}
\end{theorem}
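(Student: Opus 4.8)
The plan is to present each code as a (consta)cyclic code whose generator polynomial $g$ has degree exactly $4$, so that the dimension equals $(\text{length})-4$ and $|\cC|=q^{\,\text{length}-4}$, which is the value demanded of an MDS symbol-pair code with $d_p=6$; it then remains only to determine the minimum Hamming distance $d$. The classical Singleton bound gives $d\le 5$, and a BCH-type estimate (Proposition~\ref{prop-BCH}) will give $d\ge 3$, so the entire problem collapses to \emph{excluding codewords of Hamming weight $3$}. Granting $d\in\{4,5\}$, Proposition~\ref{prop-MDS} (when $d=5$) or Proposition~\ref{prop-consta} combined with the symbol-pair Singleton bound (when $d=4$, which forces $6\le d_p\le 6$) delivers $d_p=6$, exactly as at the end of the proof of Theorem~\ref{thm-six-one}.

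For part~1) I would take the cyclic code of length $n$ over $\Fq$ (i.e.\ $\om=1$) with defining set the union of $q$-cyclotomic cosets modulo $n$ given by $T=\{0\}\cup\{1,q\}\cup\{q+1\}$. Since $n\mid q^2-1$ forces $q^2\equiv 1\pmod n$, every coset has size at most $2$; here $\{0\}$ and $\{q+1\}$ are singletons (because $(q-1)(q+1)\equiv 0\pmod n$) while $\{1,q\}$ has size $2$, and $n>q+1$ together with $(n,q)=1$ make $0,1,q,q+1$ pairwise distinct modulo $n$. Hence $\deg g=|T|=4$. The consecutive exponents $0,1\in T$ already give $d\ge 3$ by Proposition~\ref{prop-BCH} (with $l=1$).

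The crux is the exclusion of weight-$3$ codewords. Let $\de$ be a primitive $n$-th root of unity in $\Fqt$, and suppose $c(x)=c_0+c_ix^{i}+c_jx^{j}$ with $c_0,c_i,c_j\in\Fq^{*}$ and $0<i<j<n$ were a codeword; by cyclicity every weight-$3$ codeword may be shifted to this shape. Put $x=\de^{i}$, $y=\de^{j}$. Evaluating $c$ at the four roots $\de^{s}$, $s\in T$, and using $\de^{qs}=(\de^{s})^{q}$ together with the fact that the norms $x^{q+1},y^{q+1}$ lie in $\Fq$, one sees that $(c_0,c_i,c_j)$ is a nonzero kernel vector of
\[
A=\begin{pmatrix} 1 & 1 & 1 \\ 1 & x & y \\ 1 & x^{q} & y^{q} \\ 1 & x^{q+1} & y^{q+1} \end{pmatrix},
\]
so $\mathrm{rank}\,A\le 2$ and every $3\times 3$ minor vanishes. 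The minor on the rows indexed by $0,1,q$ yields $\frac{x-1}{y-1}=\big(\frac{x-1}{y-1}\big)^{q}$, that is $\mu:=\frac{x-1}{y-1}\in\Fq$; the minor on the rows indexed by $0,1,q+1$ yields $\mu=\frac{x^{q+1}-1}{y^{q+1}-1}$. Writing $x=1+\mu(y-1)$ and expanding $x^{q+1}=x\cdot x^{q}$, these two relations simplify to $(\mu-1)\,(y-1)(y^{q}-1)=0$. But $y=\de^{j}\ne 1$, and $\mu=1$ would give $x=y$, i.e.\ $i=j$; both are impossible. Hence there is no weight-$3$ codeword, so $d\ge 4$.

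For part~2) set $N=n/2$, so $N\mid q^2-1$. When $N>q+1$ the construction of the previous two paragraphs applies verbatim to the cyclic code of length $N$ with defining set $\{0,1,q,q+1\}\bmod N$, and when $6\le N\le q+1$ a classical MDS $[N,N-4,5]_q$ (extended Reed--Solomon) code gives an $(N,6)_q$ MDS symbol-pair code by Proposition~\ref{prop-MDS}; thus the even case is mostly bookkeeping once the odd case is in hand. The genuinely delicate point—and the step I expect to be the main obstacle—is the weight-$3$ elimination of the third paragraph, and it is exactly here that the even/odd distinction bites if one insists on the negacyclic realization indicated by the title: the convenient norm exponent $q+1$ is even (as $q$ is odd whenever $n$ is even), hence unavailable among the odd exponents that index the roots of a negacyclic code, so the $3\times 3$-minor argument would have to be rebuilt around a different quadruple of odd exponents (for instance the cosets of $\pm 1$), tracking the $2$-adic valuation of $n$ to keep those exponents distinct and the relevant norms nontrivial.
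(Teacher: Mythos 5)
Your proof is correct, and it takes a genuinely different route from the paper's. The paper uses the symmetric defining set $\{\pm 1,\pm q\}$ (no root at $1$), for which minimum distance $d=3$ genuinely occurs (e.g.\ $1+x^{n/3}+x^{2n/3}$ when $3\mid n$), so it cannot exclude weight-$3$ codewords; instead it shows every weight-$3$ codeword has support split into at least three runs, i.e.\ $I(c)\ge 3$, by solving small linear systems and using that $\de+\de^{-1}\in\Fq$ forces $n\mid q\pm1$, and for even $n$ it redoes everything with a $(-1)$-constacyclic code of length $n/2$. You instead take the defining set $\{0,1,q,q+1\}$, and the extra evaluations at the exponents $0$ and $q+1$ power the rank-$\le 2$ minor argument: I checked your identity, which in full is $\mu(\mu-1)(y-1)(y^{q}-1)=0$ with $\mu\ne 0$ since $x\ne1$, and it does eliminate all weight-$3$ codewords, giving $d\ge 4$ outright; one small imprecision is that the second minor should be kept in product form $x^{q+1}-1=\mu\,(y^{q+1}-1)$ rather than as the fraction $\mu=\frac{x^{q+1}-1}{y^{q+1}-1}$, since $y^{q+1}=1$ can occur when $n\nmid q-1$ but $n\mid q^2-1$ --- your subsequent algebra only uses the product relation, so this is cosmetic. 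What your route buys: (i) no run-counting analysis for weight-$3$ words; (ii) your argument nowhere uses the parity of the length, so it in fact proves the stronger statement that an $(n,6)_q$ MDS symbol-pair code exists for \emph{every} $n\mid q^2-1$ with $n>q+1$ and $n\ge 6$, including even $n$ with $v_2(n)=v_2(q^2-1)$, which the paper's remark explicitly concedes its construction cannot reach (for instance, for $(q,n)=(3,8)$ the cyclic code over $\F_3$ with defining set $\{0,1,3,4\}$ modulo $8$ has no codewords of weight $\le 3$ and yields an $(8,6)_3$ MDS symbol-pair code outside the paper's class). Consequently your part 2) is immediate from part 1) applied at length $n/2$, plus classical MDS codes for $6\le n/2\le q+1$, and the worry in your closing paragraph about rebuilding the minor argument negacyclically is moot --- your own reduction never needs a negacyclic code. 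One caveat you share with the paper: when $n/2\in\{4,5\}$ (which under the hypotheses happens only for $(q,n)\in\{(3,8),(5,8)\}$, with $n/2=4$), the stated conclusion is vacuous since the pair-distance cannot exceed the length, and there the paper's own construction degenerates to a code of dimension $0$; so your silence on $n/2\le 5$ is a defect of the theorem's statement, not a gap in your argument relative to the paper.
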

\begin{proof}
1) Let $\de \in \Fqt^*\setminus\Fq$ be an element of order $n$ with $n$ being odd. The polynomial $g(x)=(x-\de^{-q})(x-\de^{-1})(x-\de)(x-\de^{q}) \in \Fq[x]$ divides $x^n-1$. Let $\cC_1$ be the cyclic code $\lan g(x) \ran \subset\Fq[x]/(x^n-1)$. Note that $\de^{-1}$ and $\de$ are two roots of $g(x)$ and $(2,n)=1$. Employing Proposition~\ref{prop-BCH} with $r=1$, $l=2$, $b=-1$ and $d=2$, we can see that $\de^{-1}$ and $\de$ are two consecutive roots and the minimum distance of $\cC_1$ is at least three. Together with the Singleton bound, $\cC_1$ is an $[n,n-4,d]_q$ code with $3 \le d \le 5$. When $4 \le d \le 5$, it is easily followed from Proposition~\ref{prop-MDS} and Proposition~\ref{prop-consta} that $\cC_1$ is an $(n,6)_q$ MDS symbol-pair code. When $d=3$, by Proposition~\ref{prop-MDS} and Proposition~\ref{prop-consta}, any codeword whose weight is greater than three has pair-weight at least six. Thus, by (\ref{eqn-pairweight}), it suffices to show that for each codeword $c(x) \in \cC$ with $w_H(c(x))=3$, we have $I(c(x)) \ge 3$. To this end, we are going to show that there is no codeword of the form $1+a_1x+a_ix^i$, where $2 \le i \le n-1$ and $a_1,a_i \in \Fq^*$. Below, we will split our discussion into two cases.

Firstly, assume there is a codeword $1+a_1x+a_2x^2$, where $a_1,a_2 \in \Fq^*$. Then we have the following system
$$
\begin{cases}
1+a_1\de+a_2\de^2=0, \\
1+a_1\de^{-1}+a_2\de^{-2}=0.
\end{cases}
$$
By solving this system, one can see that $a_1=-(\de+\frac{1}{\de})$. Therefore, we have $\de+\frac{1}{\de}\in \Fq^*$. Thus, $(\de+\frac{1}{\de})^q=\de+\frac{1}{\de}$, which implies that $(\de^{q+1}-1)(\de^{q-1}-1)=0$. Then, we have either $\de^{q+1}=1$ or $\de^{q-1}=1$. Namely, we have either $n \mid q+1$ or $n \mid q-1$. This is impossible because $n>q+1$.

Secondly, assume there is a codeword $1+a_1x+a_ix^i$, where $3 \le i \le n-2$ and $a_1,a_i \in \Fq^*$. Then we have the following system
$$
\begin{cases}
1+a_1\de+a_i\de^i=0, \\
1+a_1\de^{-1}+a_i\de^{-i}=0.
\end{cases}
$$
By solving the system, one can see that $a_1=-\frac{\de^{2i}-1}{\de^{2i-1}-\de}$ and $a_i=\frac{\de^{i+1}-\de^{i-1}}{\de^{2i-1}-\de}$. Therefore, we have $\frac{\de^{2i}-1}{\de^{2i-1}-\de}, \frac{\de^{i+1}-\de^{i-1}}{\de^{2i-1}-\de} \in \Fq^*$.
Since
$$
\frac{\de^{2i}-1}{\de^{2i-1}-\de}+\frac{\de^{i+1}-\de^{i-1}}{\de^{2i-1}-\de}=\frac{\de^{i+1}-1}{\de^i-\de}\in\Fq^*,
$$
and
$$
\frac{\de^{2i}-1}{\de^{2i-1}-\de}-\frac{\de^{i+1}-\de^{i-1}}{\de^{2i-1}-\de}=\frac{\de^{i+1}+1}{\de^i+\de}\in\Fq,
$$
we have
$$
\frac{\de^i-\de}{\de^{i+1}-1}+\frac{\de^{i+1}+1}{\de^i+\de}=\frac{(\de^{2i}-1)(\de^2+1)}{(\de^i+\de)(\de^{i+1}-1)} \in \Fq^*.
$$
Note that $\frac{\de^{2i}-1}{\de^{2i-1}-\de} \in \Fq^*$ and $\frac{\de^{i+1}-1}{\de^i-\de} \in \Fq^*$. Together with the above equation, we have
$$
\frac{(\de^{2i-1}-\de)(\de^2+1)}{(\de^i+\de)(\de^{i}-\de)}=\de+\frac{1}{\de} \in \Fq^*.
$$
However, as shown in the above, $\de+\frac{1}{\de} \in \Fq^*$ is impossible.

2) Let $\de \in \Fqt^*\setminus \Fq$ be an element of order $n$ with $n$ being even. Since $\de^{\frac{n}{2}}=-1$, the polynomial $g(x)=(x-\de^{-q})(x-\de^{-1})(x-\de)(x-\de^{q}) \in \Fq[x]$ divides $x^{\frac{n}{2}}+1$. Let $\cC_2$ be the (-1)-constacyclic code $\lan g(x) \ran \subset\Fq[x]/(x^{\frac{n}{2}}+1)$. Note that $\de^{-1}$ and $\de$ are two roots of $g(x)$. Employing Proposition~\ref{prop-BCH} with $r=2$, $l=1$, $b=-1$ and $d=2$, we can see that $\de^{-1}$ and $\de$ are two consecutive roots and the minimum distance of $\cC_2$ is at least three. Together with the Singleton bound, $\cC_2$ is an $[\frac{n}{2},\frac{n}{2}-4,d]_q$ code with $3 \le d \le 5$. The remaining part is similar to the proof of 1) and we omit it here.
\end{proof}

\begin{remark}
For $n \mid q^2-1$, $(n,6)_q$ MDS symbol-pair codes are constructed in Theorem~\ref{thm-six-two}, when $n$ is odd or $n$ is even and $v_2(n)<v_2(q^2-1)$. If $n$ is even and $v_2(n)=v_2(q^2-1)$, the construction in Theorem~\ref{thm-six-two} generates codes with minimum distance two, which are not MDS symbol-pair codes.
\end{remark}

\begin{remark}
By {\rm \cite[Corollary 7.4.4]{HP}}, the code $\cC_1$ (resp. $\cC_2$) in the above theorem has minimum distance $3\le d \le 4$ when $n>2(q-1)$ (resp. $n>4(q-1)$). Moreover, the codes $\cC_1$ and $\cC_2$ do have minimum distance $3$ in some cases. For instance, when $3 \mid n$, $\cC_1$ contains a codeword $1+x^{\frac{n}{3}}+x^{\frac{2n}{3}}$ with weight three and $\cC_2$ contains a codeword $1-x^{\frac{n}{6}}+x^{\frac{n}{3}}$ with weight three.
\end{remark}

In the following theorem, we will show that under certain condition, MDS symbol-pair codes with minimum pair-distance $d_p=7$ can be generated from certain cyclic codes.

\begin{theorem}\label{thm-seven}
Let $q$ be a prime power and $n$ be a positive integer with $n \mid q^2-1$ and $n>q+1$. Let $\de \in \Fqt^* \setminus \Fq$ be an element of order $n$. Let $\cC \subset \Fq[x]/(x^n-1)$ be an $[n,n-5,d]_q$ cyclic code having generator polynomial $g(x)=(x-\de^{-q})(x-\de^{-1})(x-1)(x-\de)(x-\de^q) \in \Fq[x]$. Then
\begin{itemize}
\item[1)] When $5 \le d \le 6$, $\cC$ is an $(n,7)_q$ MDS symbol-pair code.
\item[2)] When $d=4$ and $n$ is odd, $\cC$ is an $(n,7)_q$ MDS symbol-pair code if and only if for each $3 \le i \le n-3$, $\frac{\de^{i+1}-1}{\de^i-\de} \not\in \Fq^*$.
\end{itemize}
\end{theorem}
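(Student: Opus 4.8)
The plan is to show that, because the hypothesis $d=4$ pins down the minimum Hamming distance, the only possible obstruction to $d_p=7$ comes from weight-$4$ codewords whose support splits into exactly two cyclic runs, and that such codewords exist precisely when one of the quantities $\rho_i:=\frac{\de^{i+1}-1}{\de^i-\de}$ lies in $\Fq^*$. First I would record the easy reductions. Proposition~\ref{prop-BCH}, applied to the three consecutive roots $\de^{-1},1,\de$ of $g(x)$, already gives $d\ge 4$, and the Singleton-type symbol-pair bound forces $d_p\le n-(n-5)+2=7$, so it suffices to decide whether $d_p=7$. For a nonzero codeword $\bc$, Corollary~\ref{cor-pairweight} gives $w_P(\bc)=w_H(\bc)+I(\bc)$. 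If $w_H(\bc)\ge 6$ then $w_P(\bc)\ge 7$ trivially; if $w_H(\bc)=5$ then Proposition~\ref{prop-consta} gives $I(\bc)\ge 2$, hence $w_P(\bc)\ge 7$; and since $d=4$ there are no codewords of weight $\le 3$. The whole question thus reduces to weight-$4$ codewords: each has $I(\bc)\ge 2$ by Proposition~\ref{prop-consta}, so $w_P(\bc)\in\{6,7,8\}$, and $\bc$ violates $d_p=7$ exactly when $I(\bc)=2$. Hence $\cC$ is an $(n,7)_q$ MDS symbol-pair code if and only if no weight-$4$ codeword has $I(\bc)=2$.

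Next I would analyze these codewords through their parity checks. Since $\bc(x)\in\Fq[x]$, the conditions $\bc(\de^q)=\bc(\de^{-q})=0$ follow automatically from $\bc(\de)=\bc(\de^{-1})=0$, so membership in $\cC$ is governed by the three equations $\bc(\de^{-1})=\bc(1)=\bc(\de)=0$ over $\Fqt$. A weight-$4$ codeword with $I(\bc)=2$ has support that, after a cyclic shift (which preserves both weight and $I$ because $\cC$ is cyclic), is a disjoint union of two runs of sizes either $(2,2)$ or $(1,3)$. For each configuration the three equations define a $3\times 4$ coefficient matrix over $\Fqt$ whose kernel is one-dimensional (the four $3\times 3$ minors are nonzero for a genuine weight-$4$ support); an actual codeword exists precisely when this kernel line is $\Fq$-rational with all four coordinates nonzero, that is, when the relevant ratios of minors lie in $\Fq$.

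For the $(2,2)$ configuration, with support $\{0,1,i,i+1\}$ and $3\le i\le n-3$, a direct Vandermonde-type evaluation of the $3\times 3$ minors shows the kernel is spanned by $(-1,\rho_i,-\rho_i,1)$ with $\rho_i=\frac{\de^{i+1}-1}{\de^i-\de}$; since the numerator and denominator are nonzero in this range, an $\Fq^*$-valued codeword exists for this support if and only if $\rho_i\in\Fq^*$. This is exactly the condition in the statement, and it is the source of the characterization. The remaining, and hardest, step is to exclude the $(1,3)$ configuration, say with support $\{0,1,2,j\}$ and $4\le j\le n-2$. Here I would compute two independent minor-ratios and show (writing $t=\de^j$) that a codeword would force both $\frac{t-\de^2}{\de(t-1)}\in\Fq$ and $\frac{(1+\de)(t-1)}{t-\de}\in\Fq$; eliminating $t$ between them collapses to $\frac{(1+\de)^2}{\de}=\de+\frac1\de+2\in\Fq$. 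But $\de+\frac1\de\in\Fq$ would give $\de^q\in\{\de,\de^{-1}\}$, i.e.\ $\de\in\Fq$ or $n\mid q+1$, both impossible since $\de\notin\Fq$ and $n>q+1$; hence no $(1,3)$ codeword occurs.

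Combining the two cases, the weight-$4$ codewords with $I=2$ are exactly the $(2,2)$ ones, so $\cC$ fails to be MDS precisely when some $\rho_i\in\Fq^*$ with $3\le i\le n-3$, which is the desired equivalence. I expect the main difficulty to lie in the bookkeeping of these minor computations together with the Frobenius-rationality argument that converts ``$\Fq$-rational kernel'' into the explicit ratio conditions; the oddness of $n$ (ensuring the relevant powers of $\de$ are pairwise distinct and that the two runs are genuinely separate across the whole index range) should be invoked to keep the case analysis clean.
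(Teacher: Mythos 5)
Your proposal is correct and takes essentially the same route as the paper: both reduce via Corollary~\ref{cor-pairweight} and Proposition~\ref{prop-consta} to excluding weight-four codewords with $I(\bc)=2$, split the support into the $(2,2)$ and $(1,3)$ run configurations, solve the three evaluation equations at $\de^{-1},1,\de$, extract the criterion $\frac{\de^{i+1}-1}{\de^i-\de}\in\Fq^*$ from the $(2,2)$ case, and kill the $(1,3)$ case by an elimination forcing $\de+\de^{-1}\in\Fq$ (hence $n\mid q-1$ or $n\mid q+1$, contradicting $n>q+1$). Your minor variations are sound and I checked them: using Proposition~\ref{prop-consta} to dispose of $I(\bc)=1$ (which the paper redundantly re-solves by hand), the kernel/minor framing of the systems, and your $(1,3)$ computation with $t=\de^j$ giving $a_1=-\frac{(1+\de)(t-1)}{t-\de}$, $a_2=\frac{\de(t-1)}{t-\de^2}$ and the eliminant $\frac{(1+\de)^2}{\de}\in\Fq$, which is equivalent to the paper's contradiction $1+\frac{1}{\de}\in\Fq^*$.
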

\begin{proof}
By the BCH bound and the Singleton bound, the minimum distance $4 \le d \le 6$. We only prove 2) since the proof of 1) is easy. When $d=4$, by Proposition~\ref{prop-MDS} and Proposition~\ref{prop-consta}, any codeword whose weight is greater than four has pair-weight at least seven. Thus, by (\ref{eqn-pairweight}), it suffices to show that for each codeword $c(x) \in \cC$ with $w_H(c(x))=4$, we have $I(c(x)) \ge 3$. Below, we are going to study the necessary and sufficient condition which ensures this restriction on codewords of weight four.

Suppose there is a codeword $c(x)$ of weight four, such that $I(c(x))=1$. Then without loss of generality, we can assume that $c(x)=1+a_1x+a_2x^2+a_3x^3$, where $a_1,a_2,a_3\in\Fq^*$. Consequently, the following system holds:
$$
\begin{cases}
  1+a_1+a_2+a_3=0, \\
  1+a_1\de+a_2\de^2+a_3\de^3=0,\\
  1+a_1\de^{-1}+a_2\de^{-2}+a_3\de^{-3}=0.
\end{cases}
$$
By solving this system, we have $a_2=1+\de+\frac{1}{\de}$. However, $\de+\frac{1}{\de} \in \Fq$ implies that $(\de^{q+1}-1)(\de^{q-1}-1)=0$. This leads to a contradiction since $n>q+1$.

Suppose there is a codeword $c(x)$ of weight four, such that $I(c(x))=2$. Then without loss of generality, we have the following two cases
\begin{itemize}
\item[i)] There is a codeword $c(x)=1+a_1x+a_2x^2+a_ix^i$, where $3 \le i \le n-2$ and $a_1,a_2,a_i\in\Fq^*$.
\item[ii)] There is a codeword $c(x)=1+a_1x+a_ix^i+a_{i+1}x^{i+1}$, where $3 \le i \le n-3$ and $a_1,a_i,a_{i+1}\in\Fq^*$.
\end{itemize}

For Case i), we must have the following system:
$$
\begin{cases}
  1+a_1+a_2+a_i=0, \\
  1+a_1\de+a_2\de^2+a_i\de^i=0,\\
  1+a_1\de^{-1}+a_2\de^{-2}+a_i\de^{-i}=0.
\end{cases}
$$
By solving this system, we have
$$
\frac{a_1}{a_2}=-\frac{\de^{i-2}-\de}{\de^{i-1}-1}-1, \quad a_2=\frac{\de^i-1}{\de^{i-1}-\de},
$$
which implies
$$
\frac{\de^{i-2}-\de}{\de^{i-1}-1} \in \Fq\sm\{-1\}, \quad \frac{\de^i-1}{\de^{i-1}-\de} \in \Fq^*.
$$
Thus,
\begin{align*}
\frac{\de^{i-1}-1}{\de^{i-2}-\de}-\frac{\de^i-1}{\de^{i-1}-\de}&=\frac{\de^{i-2}(\de+1)(\de-1)^2}{(\de^{i-1}-\de)(\de^{i-2}-\de)}\in \Fq^*,\\
\frac{\de^{i-1}-\de}{\de^i-1}-\frac{\de^{i-2}-\de}{\de^{i-1}-1}&=\frac{\de^{i-1}(\de-1)^2}{(\de^i-1)(\de^{i-1}-1)}\in\Fq^*.
\end{align*}
By comparing the right hand side of the above two equations, we have $1+\frac{1}{\de} \in \Fq^*$, which is impossible.

For Case ii), we must have the following system:
$$
\begin{cases}
  1+a_1+a_i+a_{i+1}=0, \\
  1+a_1\de+a_i\de^i+a_{i+1}\de^{i+1}=0,\\
  1+a_1\de^{-1}+a_i\de^{-i}+a_{i+1}\de^{-(i+1)}=0.
\end{cases}
$$
If $n$ is even, the above system holds if $i=\frac{n}{2}$, $a_1=a_{\frac{n}{2}+1}=-1$ and $a_{\frac{n}{2}}=1$. Hence, the condition of $n$ being odd is necessary. By solving the above system, we have
$$
a_1=-\frac{\de^{i+1}-1}{\de^i-\de},\quad a_i=\frac{\de^{i+1}-1}{\de^i-\de}, \quad a_{i+1}=-1.
$$
Thus, the above system does not hold, if and only if for each $3 \le i \le n-3$, $\frac{\de^{i+1}-1}{\de^i-\de} \not\in \Fq^*$. Therefore, we complete the proof.
\end{proof}

Given an integer $3 \le i \le n-3$, $\frac{\de^{i+1}-1}{\de^i-\de}=\ta \in \Fq^*$ is equivalent to $\de^i=\frac{1-\ta\de}{-\ta+\de}$ for $\ta\in\Fq^*$. Thus, the necessary and sufficient condition in 2) of Theorem~\ref{thm-seven} is related to the property of the linear fractional transformation $\frac{1-\ta\de}{-\ta+\de}$ with respect to $\de$, where $\ta \in \Fq^*$. This provides a motivation to study this special type of linear fractional transformation. Using the result derived in the Appendix, we have the following theorem which gives a more precise characterization of the necessary and sufficient condition.

\begin{theorem}\label{thm-seven-adv}
Let $q$ be a prime power and $n$ be an integer with $n \mid q^2-1$ and $n>q+1$. Let $\de \in \Fqt \setminus \Fq$ be an element of order $n$. Let $x^2-bx-c$ be the monic minimal polynomial of $\de$ over $\Fq$. For an integer $i \ge 2$, define
\begin{equation}\label{aa}
\az=\sum_{j=0}^{\lf \frac{i-2}{2} \rf} {i-2-j \choose j}b^{i-2-2j}c^{j+1}, \quad \ao=\sum_{j=0}^{\lf \frac{i-1}{2} \rf} {i-1-j \choose j}b^{i-1-2j}c^{j}.
\end{equation}
Let $\cC \subset \Fq[x]/(x^n-1)$ be an $[n,n-5,d]_q$ cyclic code having generator polynomial $g(x)=(x-\de^{-q})(x-\de^{-1})(x-1)(x-\de)(x-\de^q)$. Then $\cC$ is an $[n,n-5,d]_q$ code with $4 \le d \le 6$. When $5\le d\le 6$, $\cC$ is an $(n,7)_q$ MDS symbol-pair code. When $d=4$ and $n$ is odd, $\cC$ is an $(n,7)_q$ MDS symbol-pair code if and only if for each $3 \le i \le n-3$, one of the following holds:
\begin{itemize}
\item[1)] $\ao=0$,
\end{itemize}
or when $\ao \ne 0$,
\begin{itemize}
\item[2)] if $\ao=1$, then $\az\ne-b$ or $c=1$,
\item[3)] if $\az=0$, then $\ao\ne\frac{1}{c}$ or $b=0$,
\item[4)] if $\az\ne0$ and $\ao\ne1$, then $\ao c=1$ or $\frac{\ao b+\az}{\ao-1}\ne\frac{\ao c-1}{\az}$.
\end{itemize}
\end{theorem}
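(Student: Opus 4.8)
The bound $4\le d\le 6$ and the case $5\le d\le 6$ are already supplied by Theorem~\ref{thm-seven}, so the only new work is the case $d=4$ with $n$ odd, where Theorem~\ref{thm-seven}~2) reduces the MDS property to the requirement that $\frac{\de^{i+1}-1}{\de^i-\de}\notin\Fq^*$ for every $3\le i\le n-3$. The plan is to make this condition completely explicit in terms of $b,c$. Since $\de^2=b\de+c$, every power of $\de$ lies in the $\Fq$-basis $\{1,\de\}$ of $\Fqt$, so I would first record the closed form $\de^i=\ao\de+\az$. The coefficients satisfy $a_1^{(i)}=b\,a_1^{(i-1)}+c\,a_1^{(i-2)}$ with $a_1^{(1)}=1$, $a_1^{(0)}=0$, together with $\az=c\,a_1^{(i-1)}$; solving this linear recurrence (the identity provided by the Appendix) gives the two binomial sums in~(\ref{aa}), and then $\de^{i+1}=\de\cdot\de^i=(\ao b+\az)\de+\ao c$.

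With these formulas in hand, I would clear denominators in $\frac{\de^{i+1}-1}{\de^i-\de}=\ta$ to get $\de^{i+1}-1=\ta(\de^i-\de)$ and compare the coefficients of $1$ and $\de$ in the basis $\{1,\de\}$. This turns the existence of a suitable $\ta\in\Fq^*$ into the solvability, for $\ta\in\Fq^*$, of the linear system
$$
(\ao-1)\ta=\ao b+\az,\qquad \az\,\ta=\ao c-1.
$$
Thus the MDS property at index $i$ is exactly the statement that this over-determined pair of equations in the single unknown $\ta$ has no nonzero solution, and the remainder of the proof is a consistency analysis organized by the vanishing of the leading coefficients $\ao-1$ and $\az$.

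I would then split into cases matching the statement. When $\ao\ne0$, the three disjoint possibilities $\ao=1$; $\az=0$ (which forces $\ao\ne1$); and $\az\ne0\wedge\ao\ne1$ produce conditions 2), 3), 4) respectively: in each case one of the two equations either becomes vacuous, determines $\ta$ uniquely, or is outright inconsistent, and the ``or'' clause in the statement records precisely when the forced value of $\ta$ fails to be a nonzero element of $\Fq$. The case $\ao=0$ is handled separately: here $\de^i=\az\in\Fq$, and feeding this into the system forces $\az=\pm1$, i.e.\ $\de^i=\pm1$; since $3\le i\le n-3$ and $n$ is odd, both $\de^i=1$ and $\de^i=-1$ are impossible, so no nonzero $\ta$ exists and condition 1) unconditionally yields the MDS property.

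The coefficient comparison and the four sub-cases are routine linear algebra over $\Fq$; the delicate part, and the place where the hypotheses are genuinely used, is the elimination of the degenerate configurations $\de^i\in\{1,-1,\de\}$. Checking that none of these can occur for $3\le i\le n-3$ with $n$ odd is what makes the $\ao=0$ case clean and, crucially, rules out the overlap $\ao=1,\az=0$ (which is exactly $\de^i=\de$) that would otherwise blur the boundary between cases 2) and 3). I expect this bookkeeping, together with confirming that the binomial sums in~(\ref{aa}) really solve the recurrence (the content imported from the Appendix), to be the only genuine obstacles; everything else is mechanical.
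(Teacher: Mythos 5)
Your proposal is correct and follows essentially the same route as the paper: the paper's proof of Theorem~\ref{thm-seven-adv} is literally ``apply Theorem~\ref{thm-seven} and Corollary~\ref{cor-coeff},'' and your argument re-derives Corollary~\ref{cor-coeff} inline --- writing $\de^i=\ao\de+\az$, comparing coefficients in the basis $\{1,\de\}$ to get exactly the paper's system $(\ao-1)\ta=\ao b+\az$, $\az\ta=\ao c-1$, and performing the same case analysis on the vanishing of $\ao-1$ and $\az$. The only (cosmetic) difference is that the paper proves the coefficient formulas for a general transformation $f_{u,v,w,z}$ via polynomial congruences modulo $x^2-bx-c$ and a count of ordered compositions into ones and twos (Proposition~\ref{prop-coeff}), and excludes the degenerate overlap $\ao=1,\az=0$ by the reducibility of $x^2-1$, whereas you specialize to $u=z=-1$, $v=w=\ta$ from the start and exclude the overlap by noting $\de^i=\de$ is impossible for $3\le i\le n-3$ --- both are sound.
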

\begin{proof}
The conclusion is a direct application of Theorem~\ref{thm-seven} and Corollary~\ref{cor-coeff}.
\end{proof}

\begin{remark}\label{rem}
By the sphere packing bound, when $n(n-1) \ge \frac{2q^5}{(q-1)^2}$, the code $\cC$ in the above theorem has minimum distance $d=4$.
\end{remark}

The above theorem and remark suggest an algorithm which aim to construct $(n,7)_q$ MDS symbol-pair codes with $n \mid q^2-1$, $n(n-1)\ge \frac{2q^5}{(q-1)^2}$ and $n$ being odd. We run a numerical experiment for all pairs
$$
\{(q,n)\mid \mbox{$q$ prime power}, q \le 100, n \mid q^2-1, \mbox{$n$ odd}, n >q+1 \}.
$$
For these instances, the corresponding $[n,n-5,d]_q$ code $\cC$ in Theorem~\ref{thm-seven-adv} always has $d=4$. The code $\cC$ is an $(n,7)_q$ MDS symbol-pair code whenever $q$ is odd, except for $(q,n)\in \{(59,435),(67,561),(83,861)\}$. Moreover, the experimental result suggests that $\cC$ is not an MDS symbol-pair code when $q$ is even. However, it seems not easy to prove that $q$ being odd is a necessary condition for $\cC$ being an $(n,7)_q$ MDS symbol-pair code.

\section{Conclusion}\label{sec4}

Following the idea in \cite{KZL}, we use cyclic and constacyclic codes to construct MDS symbol-pair codes with minimum pair-distance $d_p \in \{5,6,7\}$ in this paper. Our constructions extend the results in \cite{KZL}. Moreover, we derive a necessary and sufficient condition which ensures a class of cyclic code to be MDS symbol-pair codes. This condition is related to the property of a special kind of linear fractional transformations. We study these linear fractional transformations in detail and propose a more precise characterization of the necessary and sufficient condition. This characterization leads to an algorithm aiming to construct MDS symbol-pair codes with minimum pair-distance $d_p=7$. We believe that a deeper understanding on this characterization may bring new classes of MDS symbol-pair codes.

We observe that most of the known constructions of $(n,d_p)_q$ MDS symbol-pair codes focus on the case where $d_p$ is small. In this case, if we use an $[n,k,d]_q$ linear code to construct a symbol-pair code, then the difference $d_p-d$ is necessarily small. Thus, it is relatively easy to show that the required minimum pair-distance is achieved. It is an interesting research problem to consider the constructions of MDS symbol-pair codes with large minimum pair-distances.

\section*{Appendix}

Let $q$ be a prime power. For $u,v,w,z \in \Fq$ and $\de \in \Fqt$, define a linear fractional transformation from $\Fqt$ to $\Fqt$ by
$$
f_{u,v,w,z}(\de)=\frac{u+v\de}{w+z\de},
$$
where $w+z\de \ne 0$ and $uz-vw \ne 0$. We further assume that $z \ne 0$, since otherwise, $f_{u,v,w,z}$ degenerates into a linear function. Below, we will study this special kind of linear fractional transformation. In particular, suppose $\de \in \Fqt \setminus \Fq$, we will present a necessary and sufficient condition such that
$$
\de^i=\frac{u+v\de}{w+z\de}
$$
for some integer $i$. This condition provides a criterion to determine whether the linear fractional transformation $f_{u,v,w,z}$ maps $\de$ to an element belonging to the multiplicative cyclic group generated by $\de$.

\begin{proposition}\label{prop-coeff}
Let $\de \in \Fqt \setminus \Fq$. Let $x^2-bx-c$ be the monic minimal polynomial of $\de$ over $\Fq$. For an integer $i \ge 2$, define
\begin{equation}\label{a}
\az=\sum_{j=0}^{\lf \frac{i-2}{2} \rf} {i-2-j \choose j}b^{i-2-2j}c^{j+1}, \quad \ao=\sum_{j=0}^{\lf \frac{i-1}{2} \rf} {i-1-j \choose j}b^{i-1-2j}c^{j}.
\end{equation}
Then for $i \ge 0$, $\de^i=\frac{u+v\de}{w+z\de}$ if and only if one of the following holds:
\begin{itemize}
\item[1)] If $i=0$, then $u=w$, $v=z$.
\item[2)] If $i=1$, then $b=\frac{v-w}{z}$ and $c=\frac{u}{z}$.
\item[3)] If $i \ge 2$, then
$$
\ao \ne 0, \quad b=-\frac{\az}{\ao}+\frac{v}{z\ao}-\frac{w}{z}, \quad c=-\frac{w\az}{z\ao}+\frac{u}{z\ao}.
$$
\end{itemize}
\end{proposition}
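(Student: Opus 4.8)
The plan is to exploit the two-dimensional $\Fq$-linear structure of $\Fqt$ determined by the minimal polynomial of $\de$. Since $\de \in \Fqt \setminus \Fq$ has minimal polynomial $x^2 - bx - c$, the defining relation $\de^2 = b\de + c$ holds and $\{1,\de\}$ is an $\Fq$-basis of $\Fqt$. Hence each power $\de^i$ has a unique expansion $\de^i = \az + \ao\de$ with $\az,\ao \in \Fq$, and the quantities named in (\ref{a}) are exactly these coordinates. The whole statement then becomes a coordinate computation followed by elementary linear algebra over $\Fq$, using the standing hypotheses $z\ne 0$, $w+z\de\ne 0$, and $uz-vw\ne 0$.

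First I would establish the coefficient recurrence. Multiplying $\de^{i-1} = a_0^{(i-1)} + a_1^{(i-1)}\de$ by $\de$ and reducing with $\de^2 = b\de + c$ yields $a_0^{(i)} = c\,a_1^{(i-1)}$ and $a_1^{(i)} = a_0^{(i-1)} + b\,a_1^{(i-1)}$, whence the second-order recurrence $a_1^{(i)} = b\,a_1^{(i-1)} + c\,a_1^{(i-2)}$ with $a_1^{(0)}=0$, $a_1^{(1)}=1$. I would then prove by induction on $i$ that the closed form for $\ao$ in (\ref{a}) satisfies this recurrence: the inductive step reduces, after reindexing the $c$-weighted sum, to the Pascal identity $\binom{i-1-j}{j}=\binom{i-2-j}{j}+\binom{i-2-j}{j-1}$, and the floor ranges match automatically because each binomial coefficient vanishes outside its summation range. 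The formula for $\az$ follows immediately from $\az = c\,a_1^{(i-1)}$. This induction is the step needing the most bookkeeping, though it is a routine Lucas-sequence identity.

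Next I would dispose of the degenerate cases directly. For $i=0$ the equation $\de^0 = \frac{u+v\de}{w+z\de}$ reads $w+z\de = u+v\de$, so matching coordinates in $\{1,\de\}$ gives $u=w$ and $v=z$. For $i=1$ I would clear denominators to get $w\de + z\de^2 = u+v\de$, substitute $\de^2=b\de+c$, and read off $zc=u$ and $w+zb=v$, i.e. $c=\frac{u}{z}$ and $b=\frac{v-w}{z}$ (using $z\ne 0$).

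Finally, for $i\ge 2$ I would clear the denominator to obtain $(w+z\de)(\az+\ao\de)=u+v\de$, expand, and reduce $\de^2=b\de+c$. Comparing the constant and the $\de$-coordinate produces the linear system
\begin{align*}
w\,\az + cz\,\ao &= u,\\
z\,\az + (w+bz)\,\ao &= v.
\end{align*}
I would first argue that $\ao\ne 0$: if $\ao=0$ then $\de^i=\az\in\Fq$, yet the nondegeneracy hypothesis forces $\frac{u+v\de}{w+z\de}\notin\Fq$, since equality to some $\la\in\Fq$ would give $u=\la w$, $v=\la z$ and hence $uz-vw=0$, a contradiction. With $\ao\ne 0$ in hand, solving the second equation for $b$ and the first for $c$ gives precisely the displayed expressions $b=-\frac{\az}{\ao}+\frac{v}{z\ao}-\frac{w}{z}$ and $c=-\frac{w\az}{z\ao}+\frac{u}{z\ao}$. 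The converse is immediate: these formulas reverse to the linear system, which multiplies back to $(w+z\de)\de^i=u+v\de$, and dividing by $w+z\de\ne 0$ recovers $\de^i=\frac{u+v\de}{w+z\de}$. Every reduction used is reversible once $z\ne 0$ and $\ao\ne 0$ are in force, so the equivalence holds in both directions.
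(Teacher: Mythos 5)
Your proof is correct, and it takes a recognizably different route from the paper's, though the two share the same recurrence at their core. The paper never expands $\de^i$ in coordinates directly: it converts $\de^i=\frac{u+v\de}{w+z\de}$ into the polynomial congruence $x^{i+1}+\frac{w}{z}x^{i}-\frac{v}{z}x-\frac{u}{z}\equiv 0 \pmod{x^2-bx-c}$, introduces the auxiliary polynomials $T_i(x)=x^{i+1}+\frac{w}{z}x^{i}$ satisfying $T_i\equiv bT_{i-1}+cT_{i-2}$, identifies the reduced coefficients $\az,\ao$ by counting ordered decompositions of $i-2$ and $i-1$ into ones and twos, and finally compares the resulting monic quadratic with $x^2-bx-c$. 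You instead prove once and for all that $\de^i=\az+\ao\de$ in the basis $\{1,\de\}$, via the coordinate recurrence $a_1^{(i)}=b\,a_1^{(i-1)}+c\,a_1^{(i-2)}$ and an induction resting on Pascal's identity, and then reduce the original equation to the explicit $2\times 2$ linear system $w\az+cz\ao=u$, $z\az+(w+bz)\ao=v$. Your organization buys three things: the closed forms in (\ref{a}) are established independently of $u,v,w,z$ (the paper's $T_i$ entangles $\frac{w}{z}$ with the recurrence); the composition-counting step, which the paper states somewhat informally, is replaced by a routine induction; and you give an explicit justification of $\ao\ne 0$ from the nondegeneracy $uz-vw\ne 0$, a point the paper merely asserts (in the paper's setting, $\ao=0$ would force its residual linear polynomial to vanish at $\de\notin\Fq$, hence identically, yielding $uz-vw=0$ by essentially your computation). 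All the reductions you use are reversible given $z\ne 0$ and $w+z\de\ne 0$, so both directions of the equivalence are indeed covered.
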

\begin{proof}
1) and 2) are trivial. We only consider 3) below. Since $\de^i=\frac{u+v\de}{w+z\de}$ and $z \ne 0$, we have $\de^{i+1}+\frac{w}{z}\de^{i}-\frac{v}{z}\de-\frac{u}{z}=0$. Therefore, $\de$ is a root of the polynomial $x^{i+1}+\frac{w}{z}x^{i}-\frac{v}{z}x-\frac{u}{z}$ and
$$
x^{i+1}+\frac{w}{z}x^{i}-\frac{v}{z}x-\frac{u}{z} \equiv 0 \pmod{x^2-bx-c}.
$$
For an integer $i\ge0$, we define a polynomial $T_i(x)=x^{i+1}+\frac{w}{z}x^{i}$. For any $i \ge 2$, we have the following recurrence relation:
\begin{align*}
T_i(x)&\equiv x^{i+1}+\frac{w}{z}x^{i} \\
      &\equiv bx^{i}+cx^{i-1}+\frac{w}{z}(bx^{i-1}+cx^{i-2}) \\
      &\equiv b(x^i+\frac{w}{z}x^{i-1})+c(x^{i-1}+\frac{w}{z}x^{i-2}) \\
      &\equiv bT_{i-1}(x)+cT_{i-2}(x) \pmod{x^2-bx-c}.
\end{align*}
By employing this recurrence relation repeatedly, we have
\begin{align*}
T_i(x)&\equiv \dt T_2(x)+\doo T_1(x) \\
      &\equiv \eo T_1(x)+\ez T_0(x) \pmod{x^2-bx-c},
\end{align*}
where $\doo,\dt,\ez,\eo \in \Fq$. Now, we aim to determine $\ez$ and $\eo$ explicitly. The recurrence relation implies that $T_0(x)$ necessarily originates from $T_2(x)$ by subtracting a proper multiple of $x^2-bx-c$. Since $T_2(x)\equiv bT_{1}(x)+cT_{0}(x) \pmod{x^2-bx-c}$, we have $\ez=c \dt$. Apparently, $\dt$ is a summation of monomials regarding of $b$ and $c$. More precisely, suppose $i-2$ can be expressed as an ordered sum containing $i-2-2j$ ones and $j$ twos. Then this ordered sum corresponds to a monomial $b^{i-2-2j}c^j$ in the summation of $\dt$. Recall that there are $\binom{i-2-j}{j}$ ways to decompose $i-2$ into distinct ordered sums containing $i-2-2j$ ones and $j$ twos. Therefore, we have
$$
\dt=\sum_{j=0}^{\lf \frac{i-2}{2} \rf} {i-2-j \choose j}b^{i-2-2j}c^{j},
$$
and
$$
\ez=c\dt=\sum_{j=0}^{\lf \frac{i-2}{2} \rf} {i-2-j \choose j}b^{i-2-2j}c^{j+1}=\az.
$$
Similarly, by analyzing the decomposition of $i-1$ into ordered sums consisting of ones and twos, we have
$$
\eo=\sum_{j=0}^{\lf \frac{i-1}{2} \rf} {i-1-j \choose j}b^{i-1-2j}c^{j}=\ao.
$$
Consequently,
\begin{align*}
x^{i+1}+\frac{w}{z}x^{i}-\frac{v}{z}x-\frac{u}{z} &\equiv T_i(x)-\frac{v}{z}x-\frac{u}{z} \\
                                                  &\equiv \ao T_1(x)+\az T_0(x)-\frac{v}{z}x-\frac{u}{z} \\
                                                  &\equiv \ao x^2+(\az+\frac{w\ao}{z}-\frac{v}{z})x+\frac{w\az}{z}-\frac{u}{z} \\ &\equiv 0 \pmod{x^2-bx-c}.
\end{align*}
Hence, we must have $\ao \ne 0$ and $x^2+(\frac{\az}{\ao}+\frac{w}{z}-\frac{v}{z\ao})x+\frac{w\az}{z\ao}-\frac{u}{z\ao}=x^2-bx-c$. The conclusion follows by comparing the coefficients.
\end{proof}

Particularly, given $\de \in \Fqt \setminus \Fq$ and an integer $i\ge 2$, we have the following easy criterion to determine if $\de^i=\frac{1-\ta\de}{-\ta+\de}$ for some $\ta \in \Fq^*$.

\begin{corollary}\label{cor-coeff}
Let $\de \in \Fqt \setminus \Fq$. Let $x^2-bx-c$ be the monic minimal polynomial of $\de$ over $\Fq$. For an integer $i \ge 2$, $\de^i=\frac{1-\ta\de}{-\ta+\de}$ for some $\ta \in \Fq^*$ if and only if $\ao \ne 0$ and one of the following condition holds
\begin{itemize}
\item[1)] If $\ao=1$, then $\az=-b$ and $c \ne 1$,
\item[2)] If $\az=0$, then  $\ao=\frac{1}{c}$ and $b \ne 0$,
\item[3)] If $\az \ne 0$ and $\ao \ne 1$, then $\ao c\ne 1$ and $\frac{\ao b+\az}{\ao-1}=\frac{\ao c-1}{\az}$.
\end{itemize}
where $\az$ and $\ao$ are defined in {\rm (\ref{a})}. Moreover, let $\Fr$ be a subfield of $\Fq$. If $b,c \in \Fr$, then $\de^i=\frac{1-\ta\de}{-\ta+\de}$ for some $i\ge2$ only if $\ta \in \Fr$.
\end{corollary}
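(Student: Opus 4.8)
The plan is to specialize Proposition~\ref{prop-coeff} to the transformation $\frac{1-\ta\de}{-\ta+\de}$ and then decide when the resulting equations can be solved for $\ta$. Writing $\frac{1-\ta\de}{-\ta+\de}$ in the form $\frac{u+v\de}{w+z\de}$ forces $u=1$, $v=-\ta$, $w=-\ta$, $z=1$; since $z=1\ne 0$ and $-\ta+\de\ne 0$ (as $\de\notin\Fq$ while $\ta\in\Fq$), part 3) of Proposition~\ref{prop-coeff} applies for every $i\ge 2$. Substituting these values into the identities $b=-\frac{\az}{\ao}+\frac{v}{z\ao}-\frac{w}{z}$ and $c=-\frac{w\az}{z\ao}+\frac{u}{z\ao}$ and clearing $\ao$ converts the requirement $\de^i=\frac{1-\ta\de}{-\ta+\de}$ into the statement that $\ao\ne 0$ and that the linear system
$$
\ta(\ao-1)=\ao b+\az, \qquad \ta\az=\ao c-1
$$
has a solution $\ta\in\Fq^*$. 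One should note in passing that the standing nondegeneracy $uz-vw=1-\ta^2\ne 0$ from the Appendix is not actually needed here, because the derivation of Proposition~\ref{prop-coeff} uses only $z\ne 0$; thus the values $\ta=\pm 1$ require no separate treatment.

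The heart of the argument is then a case analysis on whether the two $\ta$-coefficients $\ao-1$ and $\az$ vanish, which reproduces exactly the trichotomy in the corollary. When $\ao=1$ the first equation forces $\az=-b$, and the second gives $\ta=\frac{c-1}{\az}$, which is nonzero precisely when $c\ne 1$; this is case 1). When $\az=0$ the second equation forces $\ao c=1$, and the first gives $\ta=\frac{\ao b}{\ao-1}$, which is nonzero precisely when $b\ne 0$; this is case 2). When $\az\ne 0$ and $\ao\ne 1$ both equations solve for $\ta$ individually, so a common solution exists iff the two expressions agree, $\frac{\ao b+\az}{\ao-1}=\frac{\ao c-1}{\az}$, and it is nonzero iff $\ao c\ne 1$; this is case 3).

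The one step needing care is the overlap $\ao=1$ and $\az=0$, where both coefficients vanish simultaneously: the system would then demand $b=0$ and $c=1$, forcing the minimal polynomial $x^2-bx-c$ to equal $(x-1)(x+1)$ and hence $\de\in\Fq$, contrary to hypothesis. So no solution arises in this degenerate corner, which is precisely why the three conditions may be read together as a compatible list without producing a spurious solution. I expect this degenerate-corner check to be the only genuinely delicate point; everything else is routine linear algebra over $\Fq$.

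Finally, for the subfield refinement, I would observe that $\az$ and $\ao$ are integer-coefficient polynomial expressions in $b$ and $c$, so $b,c\in\Fr$ immediately yields $\az,\ao\in\Fr$. Whenever a solution $\ta$ exists, the degenerate-corner analysis shows that at least one of $\ao-1$ and $\az$ is nonzero, and the corresponding formula $\ta=\frac{\ao b+\az}{\ao-1}$ or $\ta=\frac{\ao c-1}{\az}$ exhibits $\ta$ as a ratio of elements of $\Fr$. Hence $\ta\in\Fr$, which completes the proof.
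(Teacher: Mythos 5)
Your proposal is correct and follows essentially the same route as the paper's own proof: both specialize Proposition~\ref{prop-coeff} to reduce $\de^i=\frac{1-\ta\de}{-\ta+\de}$ to the linear system $b=\frac{(\ao-1)\ta-\az}{\ao}$, $c=\frac{\az\ta+1}{\ao}$, perform the identical case analysis on the vanishing of $\ao-1$ and $\az$ (with the same irreducibility-of-$x^2-1$ argument disposing of the corner $\ao=1$, $\az=0$), and settle the subfield claim by the same observation that at least one of the two solution formulas for $\ta$ applies. Your only departures are cosmetic—you take $u=1$, $v=w=-\ta$, $z=1$ where the paper sets $u=z=-1$, $v=w=\ta$ (the same transformation up to a common scalar, yielding identical equations)—plus the careful extra remark that $\ta=\pm1$, where $uz-vw=1-\ta^2=0$, requires no separate treatment since the proof of Proposition~\ref{prop-coeff} uses only $z\ne0$, a point the paper passes over silently.
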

\begin{proof}
By setting $u=z=-1$ and $v=w=\ta$ in Proposition~\ref{prop-coeff}, we have $\de^i=\frac{1-\ta\de}{-\ta+\de}$ for some $\ta \in \Fq^*$ if and only if
$$
b=\frac{(\ao-1)\ta-\az}{\ao}, \quad c=\frac{\az\ta+1}{\ao}.
$$
If $\az=0$ and $\ao=1$, then we have $b=0$ and $c=1$, which is impossible since $x^2-1$ is reducible over $\Fq$. If either $\ao=1$ or $\az=0$, then the Condition 1) or the Condition 2) holds. If $\az \ne 0$ and $\ao \ne 1$, the Condition 3) is derived from the expressions of $b$ and $c$. Suppose $b$ and $c$ belong to a subfield $\Fr$, then $\az,\ao \in \Fr$ by definition. Since we have either $\az\ne0$ or $\ao\ne1$, it is easy to see that $\ta\in\Fr$.
\end{proof}

\subsection*{Acknowledgements}

The authors wish to thank the anonymous reviewers for their comments which are very helpful to improve the paper. The first author would like to express his gratitude to Prof. Maosheng Xiong, Hong Kong University of Science and Technology, for the enlightening discussions on this topic.

\end{document}